\newcommand{\supp}{\textit{supp}}
\newcommand{\sat}{\textit{sat}}
\newcommand{\var}{\textit{var}}
\newcommand{\calC}{\mathcal{C}}
\newcommand{\calF}{\mathcal{F}}
\newcommand{\rk}{rk}
\newcommand{\CNF}{\textup{CNF}}
\newcommand{\NNF}{\textup{NNF}}
\newcommand{\DNNF}{\textup{DNNF}}
\newcommand{\dDNNF}{\textup{d-DNNF}}
\newcommand{\sDNNF}{\textup{s-DNNF}}
\newcommand{\sdDNNF}{\textup{sd-DNNF}}
\newcommand{\wDNNF}{\textup{wDNNF}}
\newcommand{\dwDNNF}{\textup{d-wDNNF}}
\newcommand{\swDNNF}{\textup{s-wDNNF}}
\newcommand{\sdwDNNF}{\textup{sd-wDNNF}}
\newcommand{\AC}{\textup{AC}}
\newcommand{\monAC}{\textup{AC}_{m}}
\newcommand{\DmonAC}{\textup{D-AC}_{m}}
\newcommand{\dDmonAC}{\textup{dD-AC}_{m}}
\newcommand{\sDmonAC}{\textup{sD-AC}_{m}}
\newcommand{\sdDmonAC}{\textup{sdD-AC}_{m}}
\newcommand{\wDmonAC}{\textup{wD-AC}_{m}}
\newcommand{\dwDmonAC}{\textup{dwD-AC}_{m}}
\newcommand{\swDmonAC}{\textup{swD-AC}_{m}}
\newcommand{\sdwDmonAC}{\textup{sdwD-AC}_{m}}
\newcommand{\posAC}{\textup{AC}_{p}}
\newcommand{\DposAC}{\textup{D-AC}_{p}}
\newcommand{\dDposAC}{\textup{dD-AC}_{p}}
\newcommand{\sDposAC}{\textup{sD-AC}_{p}}
\newcommand{\sdDposAC}{\textup{sdD-AC}_{p}}
\newcommand{\wDposAC}{\textup{wD-AC}_{p}}
\newcommand{\dwDposAC}{\textup{dwD-AC}_{p}}
\newcommand{\swDposAC}{\textup{swD-AC}_{p}}
\newcommand{\sdwDposAC}{\textup{sdwD-AC}_{p}}
\newtheorem{proposition}{Proposition}
\newtheorem{lemma}{Lemma}
\newtheorem{theorem}{Theorem}
\newtheorem{definition}{Definition}
\newtheorem{question}{Open Question}
\title{A Compilation of Succinctness Results for Arithmetic Circuits}
\author{
Alexis de Colnet\and
Stefan Mengel\\
\affiliations
CNRS, UMR 8188, Centre de Recherche en Informatique de Lens (CRIL), Lens,
F-62300, France\\
Univ. Artois, UMR 8188, Lens, F-62300, France
\emails
\{decolnet, mengel\}@cril.fr
}
\begin{document}

\maketitle

\begin{abstract}

Arithmetic circuits (AC) are circuits over the real numbers with $0/1$-valued input variables whose gates compute the sum or the product of their inputs. Positive AC -- that is, AC representing non-negative functions -- subsume many interesting probabilistic models such as probabilistic sentential decision diagram (PSDD) or sum-product network (SPN) on indicator variables. Efficient algorithms for many operations useful in probabilistic reasoning on these models critically depend on imposing structural restrictions to the underlying AC. Generally, adding structural restrictions yields new tractable operations but increases the size of the AC. In this paper we study the relative succinctness of classes of AC with different combinations of common restrictions.
Building on existing results for Boolean circuits, we derive an unconditional succinctness map for classes of \emph{monotone} AC -- that is, AC whose constant labels are non-negative reals -- respecting relevant combinations of the restrictions we consider. We extend a small part of the map to classes of \emph{positive} AC. Those are known to generally be exponentially more succinct than their monotone counterparts, but we observe here that for so-called deterministic circuits there is no difference between the monotone and the positive setting which allows us to lift some of our results. We end the paper with some insights on the relative succinctness of positive AC by showing exponential lower bounds on the representations of certain functions in positive AC respecting structured decomposability.
\end{abstract}

\section{Introduction}

Arithmetic circuits (AC) are a circuit model for representing polynomials by giving the order in which their inputs have to be combined by sums and multiplications. Thus, AC are not only very natural representations for real-valued polynomials, but also give programs for computing them; this can e.g.~be traced back to~\cite{Valiant80} who called them  ($+$,$\times$)-programs. Today AC play an important role in artificial intelligence because they encompass several classes of circuits with practical applications in probabilistic reasoning, for instance probabilistic sentential decision diagrams (PSDD)~\cite{KisaBCD14} or sum product networks (SPN) with indicator variables~\cite{PoonD11}. AC are also strongly related to concepts such as AND/OR-circuits~\cite{DechterM07} and Cutset Networks~\cite{RahmanKG14}. When used in probabilistic reasoning, AC always represent non-negative functions and are therefore called (somewhat misleadingly perhaps) \emph{positive AC}. Positive AC constitute a subclass of what in the probabilistic graphical models community is called \emph{probabilistic circuits}~\cite{ChoiVB20}. In the literature, positivity is is often syntactically enforced by assuming that all constants in the computation are non-negative, see e.g.~\cite{Darwiche03,PoonD11}, in which case the AC are called \emph{monotone}. Essentially, compared to their monotone counterparts, positive AC encode programs which allow subtraction as an additional operation. This has no impact on the tractability of most operations performed on the AC~\cite{Dennis16} and it is known already since~\cite{Valiant80} that it can decrease the size of AC exponentially. 

While research on arithmetic circuits in complexity theory focuses almost exclusively on trying to show lower bounds on the size of AC representing notoriously challenging  polynomials like the permanent, see e.g.~\cite{JerrumS82,ShpilkaY10,Raz09}, the goals pursued in artificial intelligence are often different: on the one hand, algorithms for generating AC from other models like Bayesian networks~\cite{ChaviraD08,ChoiKD13,KisaBCD14}, or by learning from data~\cite{LowdD08,RooshenasL16}, are a major focus. On the other hand, it is studied how imposing constraints on the structure of AC can render operations like computation of marginals or of maximum a posteriori hypotheses (MAP) or more complex queries tractable on them~\cite{HuangCD06,VergariCLTB21,KhosraviCLVB19}. 
In this latter line of work, the earliest and most well-studied properties are decomposability (also called syntactic multilinearity), smoothness (also called completeness), and determinism. There is an ongoing effort to find new properties: on the one hand, more restrictive properties to allow new operations, for example \emph{structured decomposability}~\cite{KisaBCD14,DangVB20}, on the other hand, more general properties that are sufficient to ensure tractability of important operations. For instance \emph{weak} decomposability (also called consistency) is a relaxation of decomposability which, if combined with smoothness, allows efficient marginals computation~\cite{PeharzTPD15}. 

While the analysis of more restrictive properties is driven by the prospect of AC to support more operations efficiently and therefore be more useful in practice, the quest for more generic properties is motivated by the succinctness of resulting AC: while generally all classes of AC commonly considered can represent all functions, more general classes should intuitively allow smaller representations.

The trade-off between usefulness and succinctness has also been observed for Boolean circuits in negation normal form (NNF) and attracted a lot of attention there~\cite{DarwicheM02,PipatsrisawatD08,BovaCMS16,AmarilliCMS20}. Indeed, all structural restrictions on AC mentioned above are also defined for NNF, and classes of NNF respecting combinations of restrictions have been studied almost exhaustively. In particular, for NNF, succinctness maps have been drawn that intuitively describe the relative succinctness for the classes of NNF one gets by applying different combinations of restrictions. When it comes to AC, research on lower bounds in complexity theory focused on classes with properties such as bounded-depth, tree-like structure, or multilinearity~\cite{GrigorievK98,Raz09,Raz10,ShpilkaY10} that have deep implications in theory but are not particularly desirable in practice -- with the exception of \emph{syntactic} multilinearity which is in fact decomposability. In comparison to Boolean circuits, the succinctness analysis for classes of arithmetic circuits of practical interest is fairly young and far from complete~\cite{MartensM14,ChoiD17}.

In this paper we initiate a systematic succinctness map for AC modeled after that proposed in~\cite{DarwicheM02} for NNF. We focus on classes of AC with $0/1$-variables that respect decomposability or weak decomposability and possibly determinism and$/$or smoothness. Most of our results deal with classes of \emph{monotone} AC and are obtained by lifting results from the existing succinctness map for NNF. To this end, we observe that understanding the succinctness relations between different classes of monotone AC reduces to understanding that between classes of NNF with analogous restrictions. However, several classes of NNF obtained with the reduction, namely those respecting weak decomposability, have only recently been introduced for $\NNF$~\cite{Akshay0CKRS19} and thus their position in the maps has not been studied. To analyze monotone AC, we thus prove the missing succinctness relations for these classes.
From the map for NNF and the lifting technique, we obtain the complete map linking the eight classes of \emph{monotone} AC one gets combining the different restrictions. In a modest contribution to the understanding of \emph{positive} AC, we show that under particular restrictions, all including determinism, the expressive power of classes of positive AC coincide with that of their monotone counterparts. Thus some succinctness relations in the monotone map easily extend to the positive map. However, for positive AC, several relations between classes remain open. 

Finally, in an effort to motivate further research on the succinctness relations left to prove, we describe a technique to show lower bounds on the size of positive AC. We apply it to prove lower bounds for positive AC with \emph{structured} decomposability, which is the case for e.g.~PSDD~\cite{KisaBCD14}. We stress that all separations between classes that we prove are unconditional (so no ``unless P = NP'' or similar assumptions) and exponential.

\section{Preliminaries}

\subsection{AC and NNF}

An arithmetic circuit (short AC) is defined to be a directed acyclic graph with a single source whose sinks are each labeled with a real number, a $0/1$-variable, or by complemented variables $\overline{x}$, and whose internal nodes each have two successors and are labeled by $+$ or $\times$. 
A Boolean circuit in negation normal form (short NNF) is defined completely analogously to an AC, but the internal nodes are labeled with $\lor $ and $\land$ and the only constants that can appear as sink-labels are $0$ and $1$.
The following definitions are the same for AC and NNF, so we do not differentiate the two settings here.

The sinks of a circuit $C$ are called its inputs. We say that variable $x$ appears with negative (resp. positive) polarity in $C$ if $\overline{x}$ (resp. $x$) labels a sink of $C$. If $g$ is an internal node then we denote by $g_l$ and $g_r$ its left and right successors. We define the size $|C|$ of the circuit as the number of nodes in the underlying graph, which, since the operations are binary, is at most twice the number of edges.

Let $X$ be the variables appearing in $C$. An assignment $a$ to $X$ is a mapping from $X$ to $\{0,1\}$. The weight of $a$, denoted by $w(a)$, is the number of variables it maps to 1. A partial assignment is defined as an assignment to a subset $Y\subseteq X$. In the particular case when $Y = \emptyset$, we have the unique empty assignment denoted $a_\emptyset$. Given a partial assignment $a'$, the circuit obtained by \emph{conditioning} $C$ on $a'$, denoted by $C|a'$, is obtained by replacing in $C$ for all $y \in Y$ all inputs labeled $y$ by $a'(y)$ and all inputs labeled $\overline{y}$ by $1-a'(y)$. Given two assignments $a$ and $a'$  to $X$ and $X'$ such that $a$ and $a'$ are consistent on $X \cap X'$, we let $a \cup a'$ denote the assignment to $X\cup X'$ whose restrictions to $X$ and $X'$ are $a$ and $a'$, respectively. For convenience, a literal $\ell_x \in \{x,\overline{x}\}$ will sometimes be seen as an assignment of $x$ to the value satisfying the literal, so we may write $C | \ell_x$ or $a \cup \ell_x$. 

Given an assignment $a$ to $X$, $C$ computes a value $C(a)$ on $a$ in the obvious way by first conditioning $C$ on $a$ and computing in a bottom-up fashion in $C|a$ the results of the internal nodes by computing the result of the operation they are labeled with for the values computed by their successors. $C(a)$ is the value computed by the source node. The function $f : X \rightarrow \mathbb{R}$, resp.~$f : X \rightarrow \{0,1\}$, computed by $C$ is defined as the function defined by $f(a) = C(a)$ for all assignments~$a$.

For an NNF or AC $C$ over variables~$X$ and a node~$g$ in~$C$, let $\var(g)$ denote the subset of $X$ such that $x \in \var(g)$ if and only if $x$ or $\overline{x}$ labels a sink reachable from $g$. Note that if $g$ is a sink labeled with $x$ or $\overline{x}$ then $\var(g) =\{x\}$ and that if $g$ is labeled with a constant then $\var(g) = \emptyset$. By extension $\var(C)$, sometimes called the \emph{scope} of $C$, denotes the set $\var(s)$ where $s$ is the source of $C$. 

The set of assignments to $\var(C)$ for which an AC~$C$ computes a non-zero value is called the \emph{support} of $C$ denoted by $\supp(C)$. For an NNF, these assignments are called \emph{models}, or satisfying assignments, and we use the more common notation $\sat(C)$ for that case instead of $\supp(C)$. For a node $g$ in $C$ we let $C_g$ be the sub-circuit of $C$ consisting of nodes reachable from $g$. We write $\supp(g)$ for $\supp(C_g)$. Note that when $g$ is an input labeled with a literal $\ell_x$, $\supp(g) = \{\ell_x\}$ and that for constant inputs there is $\supp(0) = \emptyset$ and $\supp(\alpha) = \{a_\emptyset\}$ for any constant $\alpha \neq 0$.

\subsection{Subclasses of AC and NNF}

In applications, in particular probabilistic reasoning, the possible outputs of AC are restricted to be non-negative. Thus we define \emph{positive} AC to be the AC that compute non-negative functions, i.e., for all assignments $a$ of its inputs, a positive AC must return a value greater or equal to $0$. We denote the class of all positive AC by $\posAC$. A proper sub-class of $\posAC$ is that of \emph{monotone} AC, denoted $\monAC$, which are the AC whose constant inputs are all non-negative.

The classes studied in this paper correspond to circuits whose nodes enforce one or more of the properties defined below: smoothness (or completeness), determinism, decomposability and weak decomposability. 

\begin{definition}
An internal node $g$ in a circuit $C$ is called \emph{smooth} when $\var(g_l) = \var(g_r)$ holds.

An AC is called \emph{smooth} (or complete) when all its $+$-nodes are smooth. We denote by $\textup{s-}\AC$ the class of smooth AC. 
\end{definition}

\begin{definition}
An internal node $g$ in a circuit $C$ is called \emph{deterministic} when there is no assignment $a$ such that $a_l \in \supp(g_l)$ and $a_r \in \supp(g_r)$, where $a_l$ and $a_r$ are the restrictions of $a$ to $var(g_l)$ and $var(g_r)$, respectively.

An AC is called \emph{deterministic} when all its $+$-nodes are deterministic. We denote by $\textup{d-}\AC$ the class of deterministic AC. 
\end{definition}

\begin{definition}
An internal node $g$ in a circuit $C$ is called \emph{decomposable} when $\var(g_l) \cap \var(g_r) = \emptyset$ holds.

An AC is called \emph{decomposable} when all its $\times$-nodes are decomposable. We denote by $\textup{D-}\AC$ the class of decomposable AC. \end{definition}

We remark that in the complexity theory literature decomposable AC are often called \emph{syntactically multilinear} AC. 

\begin{definition}
An internal node $g$ in a circuit $C$ is called \emph{weakly decomposable} when, for all $x \in \var(g_l) \cap \var(g_r)$, the variable $x$ appears with a unique polarity under $g$, i.e., either $x$ appears under $g$ or $\overline{x}$ appears under $g$, but not both. 

An AC is \emph{weakly decomposable} when all its $\times$-nodes are weakly decomposable. We denote by $\textup{wD-}\AC$ the class of weakly decomposable AC. 
\end{definition}

Weak decomposability is sometimes referred to as \emph{consistency}, but we avoid using this term here since for Boolean circuits it is often used to mean satisfiability. 

The classes $\textup{s-}\NNF$, $\textup{d-}\NNF$, $\textup{D-}\NNF$ and $\textup{wD-}\NNF$ are defined analogously as subclasses of $\NNF$ by replacing $+$-nodes by $\lor$-nodes, $\times$-nodes by $\land$-nodes. However, we will use the more common notations $\DNNF$ and $\wDNNF$ instead of $\textup{D-}\NNF$ and $\textup{wD-}\NNF$.

We will consider intersections of the classes just introduced. The names for the intersection classes combine the prefixes $\textup{s-}$, $\textup{d-}$, $\textup{D-}$ and $\textup{wD-}$ accordingly. For instance the class of deterministic decomposable NNF is denoted d-DNNF, that of smooth weakly decomposable AC is denoted by swD-AC, and so on. Observe that weak decomposability is a generalisation of decomposability, so the intersection of D-AC with wD-AC (resp. DNNF with wDNNF) is just D-AC (resp. DNNF).

Imposing specific combinations of structural restrictions above often makes operations that are intractable on unconstrained AC tractable. For instance, when AC are used in probabilistic reasoning, queries such as the computation of marginals, maximum a posteriori (MAP) or marginal MAP are tractable for different combinations of the four aforementioned constraints~\cite{PeharzTPD15,ShenCD16,KhosraviCLVB19,ChoiVB20,VergariCLTB21}. It turns out that, for all problems studied so far, interesting combinations all include either decomposability or weak decomposability. So, the classes studied in this paper are summarized as followed:
\begin{align*}
\{\emptyset\textup{, s}\}\{\emptyset\textup{, d}\}\{\textup{D, wD}\}\textup{-}\{\monAC, \posAC, \NNF\}.
\end{align*}
which is interpreted as: the subclasses of positive AC, monotone AC and NNF ($\{\monAC, \posAC, \NNF\}$) that implement decomposability or weak decomposability ($\{\textup{D, wD}\}$), and possibly smoothness ($\{\emptyset\textup{, s}\}$) , and possibly determinism ($\{\emptyset\textup{, d}\}$). So eight classes of NNF and sixteen classes of AC.

Let $X$ be a finite set of $\{0,1\}$ valued variables, every function $f : X \rightarrow \mathbb{R}_+$ has a representation in all these classes of $\monAC$ and $\posAC$. To see this, one can just write $f$ as $f(X) = \sum_{a \in \supp(f)} f(a)1_a(X)$, where $1_a(X)$ is the function returning 1 on assignment $a$ and $0$ otherwise. The terms $f(a)1_a(X)$ are easily encoded in positive AC with only decomposable $\times$-nodes, then building a positive AC computing $f$ and implementing smoothness, determinism and decomposability upon those terms AC is straightforward. Analogously, every Boolean functions on $X$ has a representation in all eight classes of NNF studied. However, in both cases there are often more compact circuits than those just described. 

\subsection{Succinctness}

As just discussed, to compare the different classes of circuits considered here, expressivity is not an issue since all classes are fully expressive in the sense that they can represent all functions. However, as we will see, the size of representations in different classes may differ greatly. 
Since the classes allow polynomial time algorithms for different problems, it is meaningful to compare the minimum size of a circuit computing the same function in different classes. This naturally leads to the introduction of \emph{succinctness} as a means to compare the classes. For a class $\calC$, the size of the minimum circuit computing $f$ in $\calC$ is called the $\calC$-size of~$f$. 

\begin{definition}
For two classes of circuits $\calC_1$ and $\calC_2$, we say that $\calC_1$ is \emph{at least as succinct as} $\calC_2$, written $\calC_1 \leq \calC_2$, if there is a polynomial $p$ such that for all $C_2 \in \calC_2$, there exists $C_1 \in \calC_1$ computing the same function with $|C_2| \leq p(|C_1|)$. 
\end{definition}

Equivalently, $\calC_1 \leq \calC_2$ if, for all functions $f$, the $\calC_2$-size is polynomially bounded by the $\calC_1$-size. We write $\calC_1 < \calC_2$ when $\calC_1 \leq \calC_2$ but $\calC_2 \nleq \calC_1$, and $\calC_1 \simeq \calC_2$ when both $\calC_1 \leq \calC_2$ and $\calC_2 \leq \calC_1$ hold; in this case we say that $\calC_1$ and $\calC_2$ are \emph{equally succinct}. Succinctness is a transitive relation.

\subsection{Term subcircuits}

For a (w)D-AC (resp. a (w)DNNF) $C$ on variables $X$, we define \emph{term subcircuits} of $C$ iteratively. Starting from the source, whenever a $\times$-node (resp. $\land$-node) is encountered, its two successors are added to the subcircuit, and whenever a $+$-node (resp. $\lor$-node) is encountered, exactly one arbitrary successor is added to the subcircuit. As indicated by the name, each term subcircuit encodes a function which is a single term $\alpha \times \ell_{x_1} \ell_{x_2} \cdots \ell_{x_k}$ where $\alpha$ is a constant and $\ell_{x_i} \in \{x_i, \overline{x_i}\}$, $(x_i)_{i \in [k]} \subseteq X$. By distributivity, the sum (resp. the disjunction) of all term subcircuits of $C$ is equivalent to~$C$. For DNNF, term subcircuits are more often called \emph{certificates} or \emph{proof trees}, but term subcircuits of wDNNF are generally not shaped like trees. The following easy lemma is shown in the appendix. 

\begin{lemma}
Let $C$ be a (weakly) decomposable $\AC$ (resp.~$\NNF$) then:
\begin{itemize}
\item if $C$ is smooth, all variables appear in all term subcircuits
\item if $C$ is deterministic, then any two distinct term subcircuits $T$ and $T'$ verify $T \times T' = 0$ (resp. $T \land T' \equiv 0$).
\end{itemize}
\end{lemma}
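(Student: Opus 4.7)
The plan is to prove both bullets by structural induction on $C$, handling the AC and NNF cases in parallel since the only formal difference is replacing $(+,\times)$ by $(\lor,\land)$.

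For the smoothness bullet, I would strengthen the statement to the following inductive invariant: for every node $g$, every term subcircuit of $C_g$ has scope exactly $\var(g)$. At a sink this holds by definition. At a $\times$-node $s$, every term subcircuit retains both successors, so by induction the two subscopes combine to $\var(s_l) \cup \var(s_r) = \var(s)$. At a $+$-node $s$, smoothness gives $\var(s_l) = \var(s_r) = \var(s)$, so whichever successor is picked, the resulting term subcircuit has scope $\var(s)$ by induction. Instantiating $g$ at the source yields the claim.

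For the determinism bullet, I propose to prove jointly by structural induction: (a) any two distinct term subcircuits $T, T'$ of $C$ satisfy $T \times T' \equiv 0$ (resp.\ $T \land T' \equiv 0$); and (b) every term subcircuit $T$ of $C$ satisfies $\supp(T) \subseteq \supp(C)$. Note that (b) is an immediate corollary of (a): since $C$ equals the sum of its term subcircuits and (a) forces at most one summand to be non-zero at any assignment, any $a \in \supp(T)$ yields $C(a) = T(a) \neq 0$. For the inductive step, at a $\times$-node source two distinct term subcircuits must differ in at least one side; the inductive (a) for the corresponding subcircuit zeroes that factor in the product, regardless of how variables overlap across the two sides (so the proof is uniform for decomposable and weakly decomposable $C$). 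At a $+$-node source $s$, two distinct term subcircuits either select the same successor — in which case (a) is inherited directly by induction — or they select different successors $s_l, s_r$; in the latter case, a hypothetical common assignment $a \in \supp(T) \cap \supp(T')$ would by the inductive (b) place $a|_{\var(s_l)} \in \supp(s_l)$ and $a|_{\var(s_r)} \in \supp(s_r)$ simultaneously, contradicting determinism of $s$.

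The main obstacle is the non-monotone setting, where a term subcircuit may evaluate to a non-zero value at an assignment on which the whole circuit vanishes by cancellation of other terms; without (b), the reduction to determinism at the $+$-node would not be valid, because $\supp(T_l) \subseteq \supp(s_l)$ is not obvious for AC allowing negative constants. Proving (a) and (b) jointly is exactly what rules this out and keeps the argument uniform across $\monAC$, $\posAC$, and $\NNF$.
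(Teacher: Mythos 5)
Your proof is correct and follows essentially the same route as the paper's: a structural induction on the circuit with a case split on whether the source is a $\times$-node (resp.\ $\land$-node) or a $+$-node (resp.\ $\lor$-node), using the strengthened invariant that the claims hold for every sub-circuit $C_g$. The one point where you go beyond the paper is the explicit auxiliary invariant (b), $\supp(T) \subseteq \supp(C)$; the paper's appendix proof uses this fact implicitly when it passes from $C_l \times C_r = 0$ to $T_l \times T'_r = 0$ at a deterministic $+$-node, so making it explicit (and noting it follows from (a) since at most one term subcircuit is non-zero on any assignment) is a welcome clarification rather than a different argument.
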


\section{Succinctness Map for Monotone AC}

\subsection{From Monotone AC to NNF}\label{section:phi}
One attractive approach towards understanding the succinctness relations between classes of AC is lifting the corresponding map for classes of NNF to classes of AC. This is because the map for NNF is quite substantial and well understood by now, so building the map for AC upon it would save us the trouble of many proofs. Here we will show that we can apply this approach for classes of \emph{monotone} AC. 
The idea is that separating the classes of Boolean functions corresponding to the support of monotone AC is enough to separate these classes of AC.

Given a monotone AC $C$, we define a Boolean circuit $\phi(C)$ that has the same underlying graph as $C$ and is obtained by just modifying the labels on the nodes of $C$. Sinks labeled by $x$ or $\overline{x}$ or the constant 0 are unchanged, but sinks labeled by constants different from zero are now labeled by the constant 1. For internal nodes, all $\times$-nodes become $\land$-nodes and all $+$-nodes become $\lor$-nodes. Clearly $\var(C) = \var(\phi(C))$ and, since $C$ and $\phi(C)$ have the same graph, we have $|C| = |\phi(C)|$. The following lemmas are easy to derive. Proofs are deferred to the appendix.

\begin{lemma}\label{lemma:mapping_monAC_to_NNF}
When $C$ is a monotone $\AC$, $\phi(C)$ is an $\NNF$ whose models are $\supp(C)$. Moreover if $C$ is (weakly) decomposable, deterministic, or smooth, then $\phi(C)$ is as well.
\end{lemma}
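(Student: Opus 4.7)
The plan is to argue by structural induction on the circuit, after verifying that $\phi(C)$ is syntactically a valid $\NNF$. Because $\phi$ leaves the underlying graph intact, only replaces each constant sink with either $0$ or $1$ and each $+$/$\times$ label with $\lor$/$\land$, it is immediate from the definition that $\phi(C)$ meets the syntactic requirements on sink and internal labels for an $\NNF$.

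For the support/models equivalence, I would prove by induction on the nodes $g$ of $C$ (from the sinks up to the source) the stronger claim: for every assignment $a$ to $\var(g)$, $g(a) > 0$ if and only if $\phi(g)(a) = 1$, i.e., $\supp(g) = \sat(\phi(g))$. The base cases are direct: literal sinks are unchanged; a $0$ sink has empty support and is unsatisfiable; and a nonzero-constant sink $\alpha > 0$ (nonnegative by monotonicity) has $g(a) = \alpha > 0$ for the unique empty assignment and its image is the constant $1$, so both sides equal $\{a_\emptyset\}$. Crucially, induction also maintains $g(a) \ge 0$, since monotonicity forbids negative constants and $+,\times$ preserve nonnegativity. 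For the inductive step at a $\times$-node $g$, nonnegativity of $g_l(a)$ and $g_r(a)$ gives $g(a) = g_l(a) \cdot g_r(a) > 0$ iff both $g_l(a) > 0$ and $g_r(a) > 0$, which by the inductive hypothesis matches $\phi(g)(a) = \phi(g_l)(a) \land \phi(g_r)(a) = 1$. At a $+$-node, $g(a) = g_l(a) + g_r(a) > 0$ iff at least one summand is positive, matching the $\lor$ in $\phi(g)$. Applying the claim at the source of $C$ yields $\supp(C) = \sat(\phi(C))$.

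For the preservation of structural properties, the observation is that $\phi$ preserves the graph and the variable labels of sinks, and therefore $\var(g)$ is the same in $C$ and in $\phi(C)$ for every node $g$. Decomposability ($\var(g_l) \cap \var(g_r) = \emptyset$ at $\times$-nodes), weak decomposability (unique polarity of shared variables under $\times$-nodes), and smoothness ($\var(g_l) = \var(g_r)$ at $+$-nodes) are purely conditions on $\var$ and on the graph structure, and $\phi$ simply swaps $\times\leftrightarrow\land$ and $+\leftrightarrow\lor$, so each property transfers gate by gate from $C$ to $\phi(C)$. For determinism at a $+$-node $g$, the condition is a statement about $\supp(g_l)$ and $\supp(g_r)$; applying the support equivalence already established to the subcircuits $C_{g_l}$ and $C_{g_r}$ shows that determinism in $C$ is equivalent to the analogous determinism condition in $\phi(C)$.

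The only conceptual subtlety is the use of monotonicity to rule out cancellations: without nonnegativity of all intermediate values, $g_l(a) + g_r(a) = 0$ could happen with one of the summands nonzero, breaking the correspondence between positive value and disjunctive satisfaction. This is the one place where the hypothesis that $C$ is monotone (rather than merely positive) is used, and I expect it to be the main point worth highlighting in the proof; the rest is a straightforward bottom-up induction together with the trivial invariance of $\var$ under $\phi$.
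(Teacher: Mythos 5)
Your proposal is correct and follows essentially the same route as the paper's proof: a bottom-up induction establishing $\supp(g)=\sat(\phi(g))$ at every node, with smoothness and (weak) decomposability handled by the invariance of $\var$ under $\phi$ and determinism by the support equivalence. Your explicit tracking of the nonnegativity invariant at $+$-nodes is a nice touch that the paper leaves implicit, but it is not a different argument.
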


\begin{lemma}\label{lemma:mapping_NNF_to_monAC}
For every $\NNF$ $D$, there exists an $\AC$ $C$ of size $|D|$ whose support are the models of $D$. Moreover if $D$ is (weakly) decomposable, deterministic, or smooth, then so is~$C$.
\end{lemma}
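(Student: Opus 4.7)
The plan is to construct $C$ as the essentially inverse transformation $\psi$ of the map $\phi$ from the previous lemma. Given an NNF $D$, let $C = \psi(D)$ have the same underlying DAG as $D$, keeping all literal sinks and the constant sink $0$ unchanged, treating the constant sink $1$ as the real number $1$, and replacing every $\land$-node by a $\times$-node and every $\lor$-node by a $+$-node. Since the only constants occurring in $C$ are $0$ and $1$, the circuit $C$ is a monotone AC, and because the DAG is unchanged we have $|C|=|D|$ and $\var(g)$ is the same in $C$ and in $D$ for every node $g$.

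The first step is to verify $\supp(C) = \sat(D)$. I would show by structural induction on nodes $g$ of $D$ that for every assignment $a$ to $\var(g)$, the value $C_g(a)$ is a non-negative integer with $C_g(a)\geq 1$ iff $D_g(a)=1$. The base cases (literal sinks and constant sinks) are immediate. For an $\land$-node becoming a $\times$-node, a product of two non-negative integers is $\geq 1$ iff both factors are; for an $\lor$-node becoming a $+$-node, a sum of two non-negative integers is $\geq 1$ iff at least one summand is. Applied at the source this gives $\supp(C)=\sat(D)$.

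The second step is to transfer the structural properties from $D$ to $C$. Decomposability, weak decomposability, and smoothness are purely syntactic conditions on the sets $\var(g_l), \var(g_r)$ at internal nodes and on which operation labels those nodes; since $\psi$ preserves the DAG, the variable labels and the correspondence $\land\leftrightarrow\times$ and $\lor\leftrightarrow +$, these properties automatically carry over. For determinism, I would use the first step: if an $\lor$-node $g$ of $D$ is deterministic then no assignment is in $\sat(g_l)\cap\sat(g_r)$ (modulo restriction to the relevant scopes), and since $\sat(g_l)=\supp(g_l)$ and $\sat(g_r)=\supp(g_r)$ in $C$, the corresponding $+$-node in $C$ is deterministic as well.

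There is essentially no hard obstacle here: the proof is a straightforward structural induction together with the observation that $\psi$ is a label-only transformation. The only point requiring a touch of care is making sure that the inductive invariant is stated over non-negative integer values rather than $\{0,1\}$, since internal $+$-nodes can produce values larger than $1$ even when the corresponding $\lor$-node is $\mathtt{true}$; this is exactly why the induction tracks the predicate ``value $\geq 1$'' rather than an equality to $1$.
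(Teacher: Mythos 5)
Your proposal is correct and follows essentially the same route as the paper: both define the label-only transformation $\psi$ swapping $\land/\lor$ for $\times/+$, note that size, scopes, and hence smoothness and (weak) decomposability are preserved, and derive determinism from the equality $\sat(g)=\supp(\psi(g))$. The only cosmetic difference is that the paper obtains $\sat(D)=\supp(\psi(D))$ by observing $\phi(\psi(D))=D$ and invoking the preceding lemma, whereas you re-prove the value correspondence directly by induction (your ``value $\geq 1$'' invariant is exactly the right one); both are fine.
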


For a class $\calC$ of AC, we define the class of NNF $\phi(\calC):= \{\phi(C)\mid C\in \calC\}$. Lemma~\ref{lemma:mapping_monAC_to_NNF} and Lemma~\ref{lemma:mapping_NNF_to_monAC} directly yield the following:
\begin{proposition}
Let $\gamma$ be any combination of properties from $\{\textup{s,d,D,wD}\}$, then $\phi(\gamma\textup{-}\monAC) = \gamma\textup{-}\NNF$.
\end{proposition}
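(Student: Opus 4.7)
The plan is a double-inclusion argument that relies directly on Lemmas~\ref{lemma:mapping_monAC_to_NNF} and~\ref{lemma:mapping_NNF_to_monAC}, together with the observation that $\phi$ admits a natural right inverse on NNF.

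For the forward inclusion $\phi(\gamma\textup{-}\monAC) \subseteq \gamma\textup{-}\NNF$, I would take any $C \in \gamma\textup{-}\monAC$ and appeal to Lemma~\ref{lemma:mapping_monAC_to_NNF}: its first part gives that $\phi(C)$ is an NNF, and its second part guarantees that each structural property in $\{\textup{s,d,D,wD}\}$ of $C$ is inherited by $\phi(C)$. Since $\gamma$ is simply a conjunction of such properties, this places $\phi(C)$ in $\gamma\textup{-}\NNF$.

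The reverse inclusion $\gamma\textup{-}\NNF \subseteq \phi(\gamma\textup{-}\monAC)$ is where care is required, since Lemma~\ref{lemma:mapping_NNF_to_monAC} only produces a monotone AC with matching support, whereas I need an AC $C$ with $\phi(C) = D$ on the nose. My plan is to exhibit a direct inverse: for $D \in \NNF$, define $\psi(D)$ to live on the same underlying graph, relabel every $\land$-node by $\times$ and every $\lor$-node by $+$, and leave sink labels untouched. Because NNF only admit the constants $0$ and $1$, $\psi(D)$ is a monotone AC, and computing $\phi(\psi(D))$ literally undoes this relabeling: $\times$ becomes $\land$, $+$ becomes $\lor$, $\phi$ sends the nonzero constant $1$ to $1$ and leaves $0$ fixed, and sink labels are preserved. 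Hence $\phi(\psi(D)) = D$. Preservation of properties under $\psi$ is immediate for smoothness, decomposability, and weak decomposability, which are purely syntactic conditions on the shared graph and sink labels; determinism transfers as well because one can show by induction on the graph that $\supp(\psi(D)_g)$ coincides with $\sat(D_g)$ for every node $g$, so the no-shared-model condition defining determinism in $D$ translates verbatim to $\psi(D)$.

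The only real subtlety — and arguably the main conceptual obstacle — is pinning down that $\phi \circ \psi$ is the identity on NNF, which hinges on the fact that $\phi$ acts as the identity on the constants $\{0,1\}$ that an NNF is even allowed to use. Once that bookkeeping is settled, both inclusions close and the proposition follows.
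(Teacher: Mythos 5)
Your proposal is correct and matches the paper's argument: the paper derives the proposition directly from Lemmas~\ref{lemma:mapping_monAC_to_NNF} and~\ref{lemma:mapping_NNF_to_monAC}, and the proof of the latter in the appendix constructs exactly your relabeling map $\psi$ and observes $\phi(\psi(D)) = D$, so the "subtlety" you flag is already how the paper closes the reverse inclusion.
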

For instance $\phi(\monAC) = \NNF$, $\phi(\DmonAC) = \DNNF$, $\phi(\dDmonAC) = \dDNNF$, etc. Moreover, since the circuit size is preserved by $\phi$, the following holds:

\begin{proposition}\label{proposition:succinctness_preserved}
Let $\calC_1$ and $\calC_2$ be classes of monotone AC, then $\calC_1 \leq \calC_2$ if and only if $\phi(\calC_1) \leq\phi(\calC_2)$.
\end{proposition}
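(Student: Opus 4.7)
The plan is to establish both implications by direct unpacking, leveraging that $\phi$ preserves circuit size exactly ($|\phi(C)| = |C|$) together with Lemma~\ref{lemma:mapping_monAC_to_NNF}, which aligns the support of a monotone AC with the models of its $\phi$-image. The other key ingredient is the definition $\phi(\calC_i) = \{\phi(C) \mid C \in \calC_i\}$, which lets us pull back any NNF in $\phi(\calC_i)$ to a monotone AC in $\calC_i$.

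For the forward direction, assume $\calC_1 \leq \calC_2$ with polynomial $p$. Any $D_2 \in \phi(\calC_2)$ is of the form $\phi(C_2)$ for some $C_2 \in \calC_2$, and the hypothesis yields $C_1 \in \calC_1$ computing the same function as $C_2$ with $|C_2| \leq p(|C_1|)$. Setting $D_1 := \phi(C_1) \in \phi(\calC_1)$ and applying Lemma~\ref{lemma:mapping_monAC_to_NNF}, we get $\sat(D_1) = \supp(C_1) = \supp(C_2) = \sat(D_2)$, so $D_1$ and $D_2$ compute the same Boolean function, and $|D_2| \leq p(|D_1|)$ by size preservation.

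The reverse direction is the mirror image. Starting from $\phi(\calC_1) \leq \phi(\calC_2)$ with polynomial $q$ and any $C_2 \in \calC_2$, apply the hypothesis to $\phi(C_2) \in \phi(\calC_2)$ to obtain $D_1 \in \phi(\calC_1)$ with the same models and $|\phi(C_2)| \leq q(|D_1|)$; write $D_1 = \phi(C_1)$ for some $C_1 \in \calC_1$, apply Lemma~\ref{lemma:mapping_monAC_to_NNF} once more to conclude $\supp(C_1) = \supp(C_2)$, and obtain $|C_2| \leq q(|C_1|)$ by size preservation.

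The main obstacle I anticipate is the gap between ``same function'' and ``same support'' surfacing in the reverse direction: the literal succinctness definition demands that $C_1$ and $C_2$ compute the same real-valued function, whereas the construction above only guarantees matching supports. This is benign for the intended applications of the proposition, where separations are witnessed by $0/1$-valued target functions (for which support coincides with function); if strict function equality is insisted upon, one can invoke Lemma~\ref{lemma:mapping_NNF_to_monAC} to exhibit an AC of the desired support without blowing up the size, then reconcile the scalar constants at the output to match $C_2$ exactly.
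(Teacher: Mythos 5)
Your forward direction is correct and is exactly the argument the paper has in mind (the paper gives no explicit proof, asserting the proposition follows from Lemmas~\ref{lemma:mapping_monAC_to_NNF} and~\ref{lemma:mapping_NNF_to_monAC} plus size preservation): equality of real-valued functions implies equality of supports, hence $\phi(C_1)$ and $\phi(C_2)$ are equivalent NNF, and $|\phi(C)|=|C|$ does the rest. You have also put your finger on the genuine weak point, which is the reverse direction. There the hypothesis only hands you a $C_1\in\calC_1$ with $\supp(C_1)=\supp(C_2)$, whereas the definition of $\calC_1\leq\calC_2$ demands that $C_1$ compute the same \emph{real-valued} function as $C_2$. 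This gap is real and is present in the paper's own (absent) argument, not something you introduced.

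However, the repair you sketch does not close it. A non-negative function is not determined by its support up to a single output scalar: take $C_2$ computing $F(x,y)=x+2y$, whose support is $\{(1,0),(0,1),(1,1)\}$ with values $1,2,3$. Lemma~\ref{lemma:mapping_NNF_to_monAC} applied to $D_1$ yields an AC computing the $0/1$ indicator of that support, and rescaling its output produces only constant functions $\alpha,\alpha,\alpha$ on the support --- never $1,2,3$. So ``reconciling the scalar constants at the output'' cannot recover $F$, and in general one would have to redistribute $C_2$'s constants throughout $C_1$ while respecting determinism and smoothness, which does not follow from the NNF-level hypothesis. Two mitigating observations: first, the separations in the map only need the forward direction (via its contrapositive, together with Lemma~\ref{lemma:mapping_NNF_to_monAC} to produce hard monotone AC from hard NNF); second, the positive uses of the reverse direction in the paper (e.g.\ smoothing, the Peharz et al.\ transformation) are all instances where a direct circuit transformation preserves the real-valued function, so the map itself survives. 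But as a proof of the biconditional in Proposition~\ref{proposition:succinctness_preserved} as literally stated, the reverse direction remains unproven both in your write-up and in the paper.
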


Since it is known already that $\sDNNF \simeq \DNNF < \dDNNF \simeq \sdDNNF$~\cite{DarwicheM02}, it follows that $\sDmonAC \simeq \DmonAC < \dDmonAC \simeq \sdDmonAC$, and these relations are \emph{unconditional} (so no ``unless P = NP'' or other complexity theoretic assumptions are needed). Weak decomposability has not been studied as widely as  decomposability for $\NNF$, so we here draw the map with the additional classes $\wDNNF$, $\swDNNF$, $\dwDNNF$ and $\sdwDNNF$. Then, using  Proposition~\ref{proposition:succinctness_preserved}, we will obtain the succinctness map for monotone AC shown in Figure~\ref{fig:map_monotone_AC}.

\begin{theorem}\label{theorem:succinctness_map}
The results of Figure~\ref{fig:map_monotone_AC} hold.
\end{theorem}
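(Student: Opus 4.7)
The plan is to prove the entire diagram at the level of $\NNF$ and then transfer it to $\monAC$ through Proposition~\ref{proposition:succinctness_preserved}. The relations among $\DNNF$, $\sDNNF$, $\dDNNF$ and $\sdDNNF$ are precisely the ones already established in the knowledge compilation map and recalled in the paragraph preceding the theorem, so the genuinely new task is to place the four weakly decomposable classes $\wDNNF$, $\swDNNF$, $\dwDNNF$ and $\sdwDNNF$ in the picture and cross-link them with their decomposable counterparts.

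The easy ``$\leq$'' edges come essentially for free. Every decomposable node is weakly decomposable, so for any $\gamma \subseteq \{\textup{s},\textup{d}\}$, every $\gamma$-$\DNNF$ is literally a $\gamma$-$\wDNNF$ of the same size, giving $\gamma\textup{-}\wDNNF \leq \gamma\textup{-}\DNNF$. Likewise, relaxing smoothness or determinism can only shrink circuits, so $\wDNNF \leq \swDNNF$, $\wDNNF \leq \dwDNNF$, $\dwDNNF \leq \sdwDNNF$, and analogously inside the decomposable block. For smoothness I would aim at $\swDNNF \simeq \wDNNF$ and $\sdwDNNF \simeq \dwDNNF$. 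The textbook smoothing trick for $\DNNF$ pads every $\lor$-child by conjoining a tautology $x \lor \overline{x}$ for each missing variable; in the weakly decomposable setting this is illegal because it can introduce both polarities of $x$ below a single $\land$-node. I would instead smooth bottom-up and, when restoring $x$ at a $\land$-node, substitute the unique literal $\ell_x$ already appearing below when $x$ occurs there, and only use a genuine $x \lor \overline{x}$ when $x$ is fresh on both sides. The blow-up is polynomial and determinism of $+$-nodes is preserved because distinct term subcircuits remain pairwise contradictory.

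Separations can then be obtained by combining Darwiche--Marquis separations with a single new $\wDNNF$-vs-$\DNNF$ lower bound. The known gap $\DNNF < \dDNNF$ transports directly to $\wDNNF < \dwDNNF$ and $\DNNF < \dwDNNF$ via the monotonicity observation $\dwDNNF \leq \dDNNF$ applied to the Darwiche--Marquis separating family. The crucial new ingredient is a family of functions with polynomial $\wDNNF$ representations whose $\DNNF$-size is superpolynomial; the natural candidates are Boolean functions whose natural top-down decomposition forces heavy variable reuse of uniform polarity across the circuit, a pattern that $\wDNNF$ can exploit but that decomposability strictly forbids. The lower bound on the $\DNNF$-size then follows from existing exponential lower bounds for syntactically multilinear circuits. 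Finally, Proposition~\ref{proposition:succinctness_preserved} transports every relation so obtained to the corresponding pair of monotone AC classes.

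The main obstacle will be the polarity-aware smoothing step for $\wDNNF$ and $\dwDNNF$, since the usual $x \lor \overline{x}$ padding is no longer syntactically legal in general. I expect to need a careful inductive construction that tracks, at each node, which polarity of each variable has already been committed on each branch and verifies that the substitution does not identify previously distinct term subcircuits (to preserve determinism). Once this piece is in place, the remainder of the map assembles mechanically from the ``$\leq$''-edges, the Darwiche--Marquis separations, and the single new $\wDNNF$-vs-$\DNNF$ lower bound.
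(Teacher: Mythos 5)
Your overall strategy---prove everything at the level of $\NNF$ and transfer via Proposition~\ref{proposition:succinctness_preserved}---is exactly the paper's, but two of your concrete claims are wrong and they would produce a different map from the one in Figure~\ref{fig:map_monotone_AC}. First, you aim to prove $\swDNNF \simeq \wDNNF$ and $\sdwDNNF \simeq \dwDNNF$; the theorem asserts the opposite, namely the \emph{strict} separations $\wDNNF < \swDNNF$ and $\dwDNNF < \sdwDNNF$. Smoothing a weakly decomposable circuit incurs an unavoidable exponential blow-up (Peharz et al.~\citeyear{PeharzTPD15}, lifted to $\NNF$ by Proposition~\ref{proposition:succinctness_preserved}). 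Your ``polarity-aware'' fix is unsound: conjoining the literal $\ell_x$ onto a branch below an $\lor$-node changes the computed function unless \emph{every} term subcircuit passing through that node already entails $\ell_x$. The paper's Proposition~\ref{proposition:proof_trees_on_same_variables} shows that this insertion of $(\land\,\ell_x)$-links is legitimate precisely under the extra hypothesis that all term subcircuits have the same variable set---a hypothesis that fails in general. (Your claim is also internally inconsistent with the rest of the map: combined with $\swDNNF \simeq \sDNNF \simeq \DNNF$ it would give $\wDNNF \simeq \DNNF$, contradicting your own separation $\DNNF \nleq \wDNNF$.)

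Second, the separation $\wDNNF < \dwDNNF$ does \emph{not} ``transport directly'' from $\DNNF < \dDNNF$: you need $\dwDNNF \nleq \wDNNF$, hence a family with small ($\textup{w}$)$\DNNF$ but exponential $\dwDNNF$-size, and an exponential $\dDNNF$ lower bound on Sauerhoff's family says nothing a priori about the larger class $\dwDNNF$. The paper bridges this gap (Lemma~\ref{lemma:dednnfvsdnnf}) by restricting each $f \in \calF$ to models of a fixed Hamming weight, which forces every term subcircuit of any $\wDNNF$ representation to mention all variables; only then does Proposition~\ref{proposition:proof_trees_on_same_variables} apply, and the chain $\dwDNNF \to \sdwDNNF \simeq \sdDNNF \simeq \dDNNF$ turns the known $\dDNNF$ lower bound into a $\dwDNNF$ lower bound. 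Relatedly, your assertion ``$\DNNF < \dwDNNF$'' is false: the two classes are incomparable, and the direction $\DNNF \nleq \dwDNNF$ (needed for the absence of an edge in the figure and for $\dwDNNF < \dDNNF$) requires its own construction, which the paper obtains by rewriting each clause $x_{k_0} \lor x_{k_1}$ of the hard monotone 2-CNF as $(\neg z_k \land x_{k_0}) \lor (z_k \land x_{k_1})$ with fresh variables $z_k$ and invoking closure of $\DNNF$ under forgetting. Your identification of monotone CNF (more generally, uniform-polarity circuits) as the source of the $\DNNF \nleq \wDNNF$ separation is the one new lower bound you get essentially right.
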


\begin{figure}[t]
\centering
\begin{tikzpicture}
\def\x{2.6};
\def\y{2};
\def\dx{0.5*\x};
\def\dy{0.5*\y};

\node (wD) at (0,0) {$\wDmonAC$};
\node (D) at (\x,0) {$\DmonAC$};
\node (dwD) at (\dx,\dy) {$\dwDmonAC$};
\node (dD) at (\x+\dx,\dy) {$\dDmonAC$};
\node (swD) at (0,\y) {$\swDmonAC$};
\node (sdwD) at (\dx,\y+\dy) {$\sdwDmonAC$};
\node (sdD) at (\x+\dx,\y+\dy) {$\sdDmonAC$};
\node (sD) at (\x,\y) {$\sDmonAC$};

\draw[-stealth] (dwD) -- (dD);
\draw[-stealth] (dwD) -- (sdwD);
\draw[double distance=1pt] (dD) -- (sdD);
\draw[double distance=1pt] (sdwD) -- (sdD);

\draw[-stealth] (swD) -- (sdwD);
\draw[-stealth] (sD) -- (sdD);
\draw[white,line width=5pt] (swD) -- (sD);
\draw[double distance=1pt] (swD) -- (sD);

\draw[-stealth] (wD) -- (swD);
\draw[-stealth] (wD) -- (dwD);
\draw[white,line width=5pt] (D) -- (sD);
\draw[double distance=1pt] (D) -- (sD);
\draw[-stealth] (D) -- (dD);
\draw[-stealth] (wD) -- (D);

\end{tikzpicture}
\caption{Succinctness map for monotone AC. An arrow $C_1 \rightarrow C_2$ means that $\calC_1 < \calC_2$. A double line $C_1$ \rotatebox[origin=c]{90}{$\parallel$} $C_2$ means that $\calC_1 \simeq \calC_2$. The absence of connector between two classes $\calC_1$ and $\calC_2$ means either that the succinctness relation is derived from transitivity or that the two classes are incomparable, i.e., $\calC_1 \nleq \calC_2$ and $\calC_2 \nleq \calC_1$.}\label{fig:map_monotone_AC}
\end{figure}
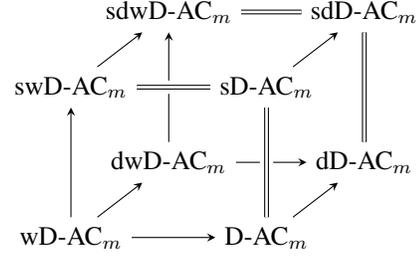

Section 3.3 is dedicated to the proof of Theorem~\ref{theorem:succinctness_map}. But first we will show a useful auxiliary result.

\subsection{Smoothing Restricted $\wDNNF$}

It was shown by Peharz et al.~\shortcite{PeharzTPD15} that transforming general $\wDmonAC$ into $\swDmonAC$ leads to an unavoidable exponential blow-up. By Proposition~\ref{proposition:succinctness_preserved}, the same is true for $\wDNNF$ and $\swDNNF$. Here we show that this is not the case when all term subcircuits have the same variables. 
\begin{proposition}\label{proposition:proof_trees_on_same_variables}
Let $D$ be a $\wDNNF$ over $n$ variables such that for any two term subcircuits $T$ and $T'$, $\var(T) = \var(T')$ holds. Then there is a smooth $\wDNNF$ $D^*$ equivalent to $D$ of size $|D^*|= O(n|D|)$. Furthermore, if $D$ is deterministic, then so is $D^*$.
\end{proposition}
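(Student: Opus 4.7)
The plan is to adapt standard DNNF smoothing to the weakly-decomposable setting via a polarity-aware construction. For each $\lor$-node $g = g_l \lor g_r$ in $D$, I will build two fresh nodes $g_l^{*} := g_l \land \lambda_{x_1} \land \cdots \land \lambda_{x_k}$ and $g_r^{*} := g_r \land \mu_{y_1} \land \cdots \land \mu_{y_{k'}}$ (realized as balanced binary chains of $\land$-nodes), where $\{x_1, \ldots, x_k\} = \var(g_r) \setminus \var(g_l)$ and $\{y_1, \ldots, y_{k'}\} = \var(g_l) \setminus \var(g_r)$, then redirect $g$'s edges to $g_l^{*}$ and $g_r^{*}$. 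Importantly, the original nodes $g_l$ and $g_r$ are left intact, so other parents sharing them are unaffected and no global duplication is needed. One shared tautology gadget $\tau_x := x \lor \overline{x}$ per variable is also created. Each $\lor$-node contributes at most $O(n)$ new $\land$-nodes, giving $|D^{*}| = O(n\,|D|)$.

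The critical choice is each gadget $\lambda_x$: I set $\lambda_x := \ell$ when $x$ has a unique polarity $\ell \in \{x, \overline{x}\}$ among the literal-sinks of $C_{g_r}$, and $\lambda_x := \tau_x$ otherwise; $\mu_y$ is chosen symmetrically using $C_{g_l}$. To see that the function is preserved in the unique-polarity case, I will fix an arbitrary term subcircuit $T$ of $D$ choosing $g_l$ at $g$; since $\var(T) = V \ni x$ by the hypothesis and $x \notin \var(g_l)$, the variable $x$ is specified by some literal of $T$ outside $C_g$, of some polarity $p$. Considering the companion term subcircuit $T''$ that agrees with $T$ outside $C_g$, chooses $g_r$ at $g$, and internally follows a path to an $x$-literal sink of $C_{g_r}$ (such a path exists since $x \in \var(g_r)$), I will observe that term subcircuits in a $\wDNNF$ are always consistent conjunctions of literals, so the unique polarity $\ell$ of $x$ in $C_{g_r}$ forces $p = \ell$. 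Hence the literal $\ell$ added by $\lambda_x$ is already implied in every $T$ through $g_l$ and the function is preserved. The tautology case needs a separate argument: if $x$ appears with both polarities in $C_{g_r}$, then weak decomposability of $D$ prohibits any $\land$-ancestor $h$ of $g$ from having $x \in \var(h_{\text{sibling}})$ — else $C_{h_l}$ would contain both polarities of $x$ via $g_r$ while $x$ is also in $\var(h_r)$, a contradiction — so introducing $\tau_x$ via $g_l^{*}$ creates no polarity conflict at any ancestor.

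Verifying the structural properties of $D^{*}$ is then straightforward. Smoothness is immediate at every modified $\lor$-node since $\var(g_l^{*}) = \var(g_r^{*}) = \var(g_l) \cup \var(g_r)$, and the $\lor$-nodes internal to the $\tau_x$ gadgets are trivially smooth. The new $\land$-nodes in the chains $g_l^{*}$ and $g_r^{*}$ are fully decomposable since the added literals or tautology gadgets concern variables outside the original child's scope; for every preexisting $\land$-node the variable set below is unchanged, the polarity pattern is preserved in the unique-polarity case (we added exactly the polarity already present in $C_{g_r}$), and the tautology case was just argued to create no new sibling conflict. Preservation of determinism follows since $\sat(g_l^{*}) \subseteq \sat(g_l)$ and $\sat(g_r^{*}) \subseteq \sat(g_r)$ keep supports disjoint at each original $\lor$-node, and the $\tau_x$ gadgets themselves are trivially deterministic. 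I expect the main subtlety to be precisely this weak decomposability argument for preexisting $\land$-ancestors, especially the situation in which $g_l$ is reached via ancestors enforcing different polarity contexts on the same variable; this is exactly why the construction allocates a fresh $g_l^{*}$ per $\lor$-node rather than rewriting $g_l$ in place.
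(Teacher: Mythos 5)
Your proposal is correct and follows essentially the same route as the paper: conjoin, on the deficient side of each non-smooth $\lor$-node, each missing variable with the polarity forced by the context (the paper's ``$(\land \ell_x)$-links''), and justify the redundancy of that conjunct via the hypothesis that all term subcircuits share the same variables together with the consistency of term subcircuits in a $\wDNNF$. The only real difference is your tautology fallback $\tau_x$, which the paper's argument shows is vacuous under the hypothesis (the relevant variable always has a unique polarity under the sibling child), though including it does no harm.
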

Proposition~\ref{proposition:proof_trees_on_same_variables} will be used in the next section to prove the succinctness map for $\NNF$. Before we prove it, we give some more definitions.

\begin{definition} Let $\ell_x \in \{x,\overline{x}\}$. An $(\land \ell_x)$-link is a $\land$-node whose successors include a leaf labeled by $\ell_x$. Let $g$ be a node and let $p$ be a predecessor of $g$, then inserting an $(\land \ell_x)$-link between $g$ and $p$ means replacing the connection 
\raisebox{-0.5\height}{
\begin{tikzpicture}
\node[inner sep=1.5,font=\footnotesize] (g) at (0,0) {$g$};
\node[inner sep=1.5,font=\footnotesize] (p) at (0.6,0.6) {$p$};
\draw (g) -- (p);
\end{tikzpicture}
} by 
\raisebox{-0.5\height}{
\begin{tikzpicture}
\node[inner sep=1,font=\footnotesize] (g) at (0,0) {$g$};
\node[inner sep=1,font=\footnotesize] (l) at (0.35,0.35) {$\land$};
\node[inner sep=1,font=\footnotesize] (p) at (0.7,0.7) {$p$};
\node[inner sep=1,font=\footnotesize] (x) at (0.88,0) {$\ell_x$};
\draw (g) -- (l);
\draw (l) -- (p);
\draw (x) -- (l);
\end{tikzpicture}
}.
\end{definition}

We call the intermediate $\land$-node in the construction above the \emph{link node}. A succession of link nodes is a \emph{chain of links}. We remark that links have already been used by Peharz et al.~\shortcite{PeharzTPD15} to analyze the impact of smoothness on $\wDmonAC$, but we use them here in a different way.

For a term subcircuit $T$ containing a node $g$, let $T_g$ denote the sub-circuit of $T$ under $g$, and let $T_{\overline{g}}$ be the sub-circuit of $T$ corresponding to all nodes accessible from the source without passing through $g$. Observe that because of \emph{weak} decomposability, some nodes accessible from $g$ may be reached by paths in $T$ not passing through $g$, so $T_g$ and $T_{\overline{g}}$ are not necessarily disjoint.

\begin{proof}[Proof (of Proposition~\ref{proposition:proof_trees_on_same_variables})]
Let $g$ be an $\lor$-node such that $\var(g_l) \neq \var(g_r)$. Let $x \in \var(g_l)$ and $x \not\in \var(g_r)$. There exists an $\land$-node that is an ancestor of $g$ in $D$, otherwise not all term subcircuits of $D$ would have the same variables.  So, for all term subcircuits $T$ of $D$ containing $g$, $T_{\overline{g}}$ is not empty. Moreover $x$ must be contained in $var(T_{\overline{g}})$ for otherwise we can construct a term subcircuit that does not contain $x$ by extending $T_{\overline{g}}$ to a term subcircuit choosing $g_r$ as the child of~$g$.

We claim that $x$ appears with unique polarity under $g_l$. To see this, assume first that $x$ appears positively in $T_{\overline{g}}$. Now if $\overline x$ appeared below $g_l$ as the label of a node $g^*$. Then, we could extend $T_{\overline{g}}$ to a term subcircuit $T^*$ containing $g^*$ and thus the variable $\overline x$.  But then $T^*$ would contain both $x$ and $\overline x$  which is impossible because $C$ is weakly decomposable. If $x$ appears negatively in $T_{\overline{g}}$, we reason analogously. So in any case, $x$ appears with unique polarity under $g_l$; we assume in the remainder that it appears positively, the other case is completely analogous.

Analogously to above, one sees that for all term subcircuits $T'$ containing $g$, we have that $T'_{\overline{g}}$ contains $x$.
So, for all $T'$ passing through $g$, we have $T \equiv T \land x$. Now insert an $(\land x)$-link between $g$ and $g_r$ and let $D'$ be the resulting $\wDNNF$. We write $(g,g_r) \in T$ when the wire from $g$ to $g_r$ is in the term subcircuit $T$ of $D$. There is a bijection $\varphi$ between the term subcircuits of $D$ and those of $D'$: for a term subcircuit $T$ of $D$, set $\varphi(T) = T$ if $(g,g_r) \not\in T$, and let $\varphi(T)$ be the term subcircuit of $D'$ we get from $T$ by inserting the $(\land x)$-link between $g$ and $g_r$ otherwise. Clearly, when $(g,g_r) \in T$, then $\varphi(T) \equiv T \land x$, and we have already seen that $T \land x \equiv T$ in that case. So  
\begin{align*}
D' &\equiv \bigvee_{T : (g,g_r) \in T} \varphi(T) \,\, \lor \bigvee_{T : (g,g_r)\not\in T} \varphi(T) \\
  &\equiv \bigvee_{T : (g,g_r) \in T} T \indent \,\,\,\, \lor \,\bigvee_{T : (g,g_r)\not\in T} T \indent \indent \equiv D
\end{align*}
Observe that $\var(g)$ is identical in $D$ and $D'$ since $x$ was already in $\var(g_l)$ in $D$. Observe also that the $\land$-link node is decomposable. So $D'$ is a $\wDNNF$. Now in $D'$ the variable $x$ appears under both successors of $g$. We repeat that process until the successors of $g$ have the same set of variables, so until $g$ is smooth. Doing this for all non-smooth $\lor$-nodes yields a $\wDNNF$ $D^*$ that is smooth. The construction only adds chains of links between nodes that were originally in $D$, and since there are $n$ variables, at most $n$ links are inserted between any two connected nodes of $D$, hence $|D^*| = O(n|D|)$.

Finally we argue that if $D$ is deterministic, then so is $D^*$. We just need to prove this for $D'$, i.e., one single addition of~an $(\land x)$-node. Assume that $g$ is deterministic in $D$. Let $g'_r$ be the $\land$-node inserted between $g$ and $g_r$ in $D'$. The successors of $g'_r$ are $x$ and $g_r$. Assume there is an assignment $a'$ to $\var(g)$ whose restrictions $a'_l$ and $a'_r$ to $\var(g_l)$ and $\var(g'_r)$ are in $\sat(g_l)$ and $\sat(g'_r)$ respectively. Then $a'_r$ satisfies $g_r$ so~$g$ is not deterministic in $D$. This is a contradiction, so $g$ remains deterministic in~$D'$ and $D'$ is deterministic.
\end{proof}

\subsection{Proof of Theorem~\ref{theorem:succinctness_map}}

\begin{figure}[t]
\centering
\begin{tikzpicture}
\def\x{2.6};
\def\y{2};
\def\dx{0.5*\x};
\def\dy{0.5*\y};

\node (wD) at (0,0) {$\wDNNF$};
\node (D) at (\x,0) {$\DNNF$};
\node (dwD) at (\dx,\dy) {$\dwDNNF$};
\node (dD) at (\x+\dx,\dy) {$\dDNNF$};
\node (swD) at (0,\y) {$\swDNNF$};
\node (sdwD) at (\dx,\y+\dy) {$\sdwDNNF$};
\node (sdD) at (\x+\dx,\y+\dy) {$\sdDNNF$};
\node (sD) at (\x,\y) {$\sDNNF$};

\draw[-stealth] (dwD) -- (dD);
\draw[-stealth] (dwD) -- (sdwD);
\draw[double distance=1pt] (dD) -- (sdD);
\draw[double distance=1pt] (sdwD) -- (sdD);

\draw[-stealth] (swD) -- (sdwD);
\draw[-stealth] (sD) -- (sdD);
\draw[white,line width=5pt] (swD) -- (sD);
\draw[double distance=1pt] (swD) -- (sD);

\draw[-stealth] (wD) -- (swD);
\draw[-stealth] (wD) -- (dwD);
\draw[white,line width=5pt] (D) -- (sD);
\draw[double distance=1pt] (D) -- (sD);
\draw[-stealth] (D) -- (dD);
\draw[-stealth] (wD) -- (D);

\end{tikzpicture}
\caption{Succinctness map for different subclasses of $\NNF$.}\label{fig:map_DNNF}
\end{figure}
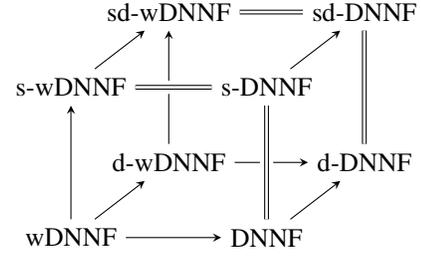

By Proposition~\ref{proposition:succinctness_preserved}, Theorem~\ref{theorem:succinctness_map} is equivalent to proving the correctness of the corresponding map for subclasses of $\wDNNF$ that for the convenience of the reader is given in Figure~\ref{fig:map_DNNF}. So we will exclusively work on that map here and Theorem~\ref{theorem:succinctness_map} follows directly.

It was shown by Darwiche and Marquis~\shortcite{DarwicheM02} that $\sDNNF$ and $\DNNF$ are equally succinct, and that $\sdDNNF$ and $\dDNNF$ are equally succinct, the paper also contains the statement $\DNNF < \dDNNF$ conditioned on standard complexity theoretic assumptions. The result was made unconditional in~\cite{BovaCMS16}. So we already have the right face of the cube-like succinctness map of Figure~\ref{fig:map_DNNF}.

\begin{lemma}
$\wDNNF < \DNNF$.
\begin{proof}
Since $\DNNF \subseteq \wDNNF$ there only is $\DNNF \nleq \wDNNF$ to prove. It is readily verified that monotone NNF, that is, NNF with non-negative literal inputs, are wDNNF. In~\cite{BovaCMS14}, see also \cite{Capelli16}, the separation $\DNNF \nleq \CNF$ is shown finding an infinite class of monotone 2-CNF that have polynomial size but whose equivalent DNNF all have exponential size. Monotone CNF are wDNNF so this proves $\DNNF \nleq \wDNNF$. 
\end{proof}
\end{lemma}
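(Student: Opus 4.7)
The plan is to handle the two directions of the strict inequality separately. The direction $\wDNNF \leq \DNNF$ is immediate: decomposability is strictly stronger than weak decomposability, so $\DNNF \subseteq \wDNNF$ holds syntactically and the identity map is already a polynomial-size translation. Hence the entire content of the lemma is the strict separation $\DNNF \nleq \wDNNF$.

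For the separation, my plan is to piggy-back on an existing unconditional lower bound against $\DNNF$ and to wrap the witnessing formulas into a syntactically weakly decomposable representation. Concretely, I would use the result of Bova, Capelli, Mengel and Slivovsky (cf.~\cite{BovaCMS14,Capelli16}) giving an infinite family of monotone $2$-$\CNF$ formulas of polynomial size whose equivalent $\DNNF$s provably have exponential size, i.e.\ witnessing $\DNNF \nleq \CNF$. The glue step is the observation that every \emph{monotone} $\NNF$ — that is, an $\NNF$ in which no variable appears negated anywhere — already lies in $\wDNNF$: at each $\land$-node, the weak decomposability condition only requires that every variable shared between the two children appear with a unique polarity below the node, and this is vacuously satisfied when only positive literals occur in the circuit. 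Monotone $\CNF$s are a special case of monotone $\NNF$s (viewed as an $\land$ of $\lor$s of positive literals), and their $\wDNNF$-size is bounded by the formula length.

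Combining these two ingredients, the Bova et al.~family provides polynomial-size $\wDNNF$ representations of functions whose $\DNNF$-size is exponential, which gives $\DNNF \nleq \wDNNF$; together with the easy direction this yields the strict succinctness $\wDNNF < \DNNF$.

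There is no real technical obstacle here: the hard work (the exponential $\DNNF$ lower bound for a family of monotone $2$-$\CNF$s) is borrowed from prior work, and the only thing to verify on the spot is the one-line observation that monotone $\NNF$s, and in particular monotone $\CNF$s, fall inside $\wDNNF$.
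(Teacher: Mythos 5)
Your proposal is correct and follows essentially the same route as the paper: both reduce the lemma to $\DNNF \nleq \wDNNF$, invoke the Bova et al.\ family of monotone $2$-$\CNF$ formulas with exponential $\DNNF$ size, and close the gap with the observation that monotone $\CNF$ (as monotone $\NNF$) is trivially weakly decomposable. No differences worth noting.
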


Peharz et al.~\shortcite{PeharzTPD15} give an algorithm to transform any smooth \emph{weakly} decomposable monotone AC into an equivalent smooth decomposable monotone AC in polynomial time\footnote{Peharz et al. work on sum product networks (SPN) with indicator variables inputs. Their SPN differ from our monotone AC in that the non-negative constants are not inputs of the circuit but weights on the connectors of the $+$-nodes. Such SPN are converted into our monotone AC in polynomial time by replacing each weighted edge by a $\times$-node whose successors include the weight.}. Careful examination of the algorithm shows that it can be adapted to turn any $\swDNNF$ into an equivalent $\sDNNF$ in polynomial time (the existence of the transformation actually derives from Lemmas~\ref{lemma:mapping_monAC_to_NNF} and~\ref{lemma:mapping_NNF_to_monAC}). Examining the algorithm even further, one sees that it preserves determinism, so the adapted variant for NNF also gives a polynomial time transformation from $\sdwDNNF$ to $\sdDNNF$. 
 
\begin{lemma}
$\swDNNF \simeq \sDNNF$ and $\sdwDNNF \simeq \sdDNNF$. But $\wDNNF < \swDNNF$.
\end{lemma}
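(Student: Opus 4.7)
The plan is to break the lemma into its three claims and dispatch each via the machinery set up in the preceding paragraph. The easy directions are all inclusions: $\sDNNF \subseteq \swDNNF$, $\sdDNNF \subseteq \sdwDNNF$, and $\swDNNF \subseteq \wDNNF$, which respectively give $\swDNNF \leq \sDNNF$, $\sdwDNNF \leq \sdDNNF$, and $\wDNNF \leq \swDNNF$ for free, since any circuit in the smaller class is a same-size circuit in the larger class computing the same function.

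For the converse directions $\sDNNF \leq \swDNNF$ and $\sdDNNF \leq \sdwDNNF$, I would invoke the NNF adaptation of Peharz et al.'s smoothing algorithm discussed just before the lemma. The cleanest justification is the one sketched there: given an $\swDNNF$ (resp.~$\sdwDNNF$) $D$, first use Lemma~\ref{lemma:mapping_NNF_to_monAC} to produce an equivalent $\swDmonAC$ (resp.~$\sdwDmonAC$) $C$ of size $|D|$; then apply the polynomial-time Peharz et al.~smoothing to obtain an equivalent $\sDmonAC$ (resp.~$\sdDmonAC$) $C'$ of size polynomial in $|C|$; finally apply the translation $\phi$ of Section~\ref{section:phi}, which by Lemma~\ref{lemma:mapping_monAC_to_NNF} yields an $\sDNNF$ (resp.~$\sdDNNF$) $\phi(C')$ of the same size computing exactly the Boolean function of $D$. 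Determinism is preserved at each step, which handles the $\textup{sd}$ case.

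For the strict separation $\swDNNF \nleq \wDNNF$, I would lift Peharz et al.'s exponential lower bound for transforming $\wDmonAC$ into $\swDmonAC$, stated at the beginning of Section~3.2. That lower bound is exactly $\swDmonAC \nleq \wDmonAC$, and Proposition~\ref{proposition:succinctness_preserved} applied to the pair $(\swDmonAC, \wDmonAC)$ immediately transports it to $\swDNNF \nleq \wDNNF$ since $\phi(\swDmonAC)=\swDNNF$ and $\phi(\wDmonAC)=\wDNNF$. Combined with the trivial $\wDNNF \leq \swDNNF$ above, this gives $\wDNNF < \swDNNF$.

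The main obstacle is verifying that both the smoothing step and the lifting $D \mapsto C \mapsto C' \mapsto \phi(C')$ truly preserve determinism when $D$ is an $\sdwDNNF$, and that the size bound on the intermediate $\sDmonAC$ (resp.~$\sdDmonAC$) transfers to $\sDNNF$ (resp.~$\sdDNNF$) without hidden blow-up. Both points reduce to facts already asserted before the lemma: Lemmas~\ref{lemma:mapping_monAC_to_NNF} and~\ref{lemma:mapping_NNF_to_monAC} guarantee preservation of all structural properties and of size by the two translations, and the ``further examination'' remark on the Peharz et al.~algorithm guarantees that it preserves determinism. So no new construction is needed; the proof essentially consists of assembling these three ingredients in the right order.
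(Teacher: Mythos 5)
Your proof is correct. For the two equivalences you take essentially the paper's route: the easy directions are the subclass inclusions, and for $\sDNNF \leq \swDNNF$ and $\sdDNNF \leq \sdwDNNF$ the paper likewise relies on the Peharz et al.\ smoothing algorithm, explicitly noting (in the paragraph preceding the lemma) that the transformation can be obtained either by adapting the algorithm directly to Boolean circuits or, as you do, by composing Lemma~\ref{lemma:mapping_NNF_to_monAC}, the monotone-AC algorithm, and $\phi$; your only imprecision is calling the intermediate AC ``equivalent'' to $D$ when what is preserved is the support, but your final conclusion that $\phi(C')$ computes the Boolean function of $D$ is right. Where you genuinely diverge is the separation $\swDNNF \nleq \wDNNF$: you import Peharz et al.'s exponential lower bound for smoothing $\wDmonAC$ and transport it through Proposition~\ref{proposition:succinctness_preserved} (an inference the paper itself makes at the top of Section~3.2), whereas the paper's proof of this lemma instead argues by contradiction via transitivity, namely that $\swDNNF \leq \wDNNF$ together with the just-established $\swDNNF \simeq \sDNNF \simeq \DNNF$ would yield $\DNNF \leq \wDNNF$, contradicting the earlier lemma $\wDNNF < \DNNF$. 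Your route leans on an external lower bound; the paper's is self-contained given its own separation (proved from the Bova et al.\ monotone 2-CNF family) and has the aesthetic advantage of not re-invoking the cited result. Both arguments are valid.
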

\begin{proof}[Proof sketch]
The proof that $\sDNNF \leq \swDNNF$ and $\sdDNNF \leq \sdwDNNF$ is an adaptation of the techniques in~\cite{PeharzTPD15} to the case of Boolean circuits. The reverse succinctness relations holds since decomposability is a particular kind of weak decomposability.

As for $\wDNNF < \swDNNF$, $\wDNNF \leq \swDNNF$ comes from $\swDNNF$ being a subclass of $\wDNNF$, and $\swDNNF \nleq \wDNNF$ holds for otherwise $\DNNF \nleq \wDNNF$ would be violated by transitivity.
\end{proof}

\begin{lemma}\label{lemma:dednnfvsdnnf}
$\dwDNNF \nleq \DNNF$.
\end{lemma}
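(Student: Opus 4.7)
My plan is to produce a family $(f_n)$ of Boolean functions admitting polynomial-size $\DNNF$ but requiring $2^{\Omega(n)}$-size $\dwDNNF$. I would take $(f_n)$ to be a family witnessing the unconditional separation $\DNNF < \dDNNF$ from~\cite{BovaCMS16}: $f_n$ has polynomial-size $\DNNF$ but every $\dDNNF$ for $f_n$ has size $2^{\Omega(n)}$. The goal is to show that the same family, possibly after a small modification, also has no polynomial-size $\dwDNNF$.

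The core reduction is by contradiction. Assume $f_n$ has a $\dwDNNF$ $D$ of polynomial size. If every term subcircuit of $D$ has the same variable set, then Proposition~\ref{proposition:proof_trees_on_same_variables} produces an equivalent smooth deterministic wDNNF $D^*$ of size $O(n|D|)$. Chaining with the preceding lemma giving $\sdwDNNF \simeq \sdDNNF$ and the Darwiche--Marquis equivalence $\sdDNNF \simeq \dDNNF$, one obtains a polynomial-size $\dDNNF$ for $f_n$, contradicting the $\dDNNF$ lower bound.

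Thus the key step is to ensure that every term subcircuit of any $\dwDNNF$ computing $f_n$ shares the same variable set. I would enforce this by arranging that every prime implicant of $f_n$ is a complete assignment, equivalently that $f_n$ is \emph{fully sensitive}: flipping any variable at any satisfying assignment yields a non-satisfying one. Then every term subcircuit of any $\dwDNNF$ for $f_n$ must fix every variable, so the hypothesis of Proposition~\ref{proposition:proof_trees_on_same_variables} holds automatically. To obtain full sensitivity without destroying the separation, I would pad the BCMS separator with an auxiliary gadget on fresh variables (for instance pairing each original variable with a parity gadget) designed to make every coordinate essential at every satisfying assignment, while keeping the $\DNNF$ size polynomial and the $\dDNNF$ size exponential.

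The main obstacle lies in this padding step. One must design a construction that simultaneously forces full sensitivity, preserves the polynomial $\DNNF$ upper bound, and preserves the exponential $\dDNNF$ lower bound; ensuring that the three conditions coexist for a concrete gadget is where the careful technical work lies. Once such a family is in hand, the rest of the argument is the routine chain through the equivalences already established in this section.
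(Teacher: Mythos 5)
Your reduction skeleton is exactly the one the paper uses: start from the unconditional $\DNNF$-versus-$\dDNNF$ separator of Bova et al., force every term subcircuit of any $\wDNNF$ for the target function to mention all variables, apply Proposition~\ref{proposition:proof_trees_on_same_variables} to smooth while preserving determinism, and collapse through $\sdwDNNF \simeq \sdDNNF \simeq \dDNNF$ to contradict the $\dDNNF$ lower bound. What is missing is the one step you yourself flag as the ``main obstacle'': a concrete modification of the separator that forces the all-variables property while provably preserving both the polynomial $\DNNF$ upper bound and the exponential $\dDNNF$ lower bound. That step is the actual content of the lemma, and your parity-gadget suggestion does not obviously deliver it: conjoining $f_n$ with constraints $x_i \oplus z_i$ on fresh variables is not a free operation on $\DNNF$ (conjunction with a formula sharing the $x$-variables is not polynomial-time in general), and transferring the $\dDNNF$ lower bound back from the padded function would require forgetting the $z_i$, an operation that destroys determinism. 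So neither of the two preservation claims comes for free, and as written the proof is incomplete.

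The paper resolves this with a weight-slice argument rather than padding. For each $f$ in the separator class $\calF$ it considers $f \land [w(\cdot)=k]$, the restriction of $f$ to its models of Hamming weight $k$. Any term subcircuit of a $\wDNNF$ for such a function that omitted a variable $x$ would have two models of different weights (one with $x=0$, one with $x=1$), which is impossible; so the all-variables hypothesis holds with no gadget at all. The $\DNNF$ upper bound is preserved because $\DNNF$ support restriction to fixed-weight models in polynomial time, and the $\dDNNF$ lower bound survives for at least one choice of $k$ by pigeonhole, since the disjunction over all $k$ of the smallest $\dDNNF$ for the weight-$k$ slices is itself a $\dDNNF$ for $f$ of size at most the sum of the slice sizes. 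You would need to supply an argument of comparable completeness for your padding construction, or switch to a slicing device of this kind, before the proof is done.
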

\begin{proof}
Consider the class $\calF$ of functions introduced by Sauerhoff \shortcite{Sauerhoff03} and used in \cite{BovaCMS16}. 
All $f \in \calF$ on~$n$ variables have DNNF-size polynomial in $n$ but d-DNNF-size at least $2^{\Omega(\sqrt{n})}$. For an integer~$k$, let $D_k(f)$ be the smallest $\dDNNF$ representing $f \land [w(\cdot) = k]$, i.e., the function whose models are exactly the models of $f$ of weight~$k$. Since the circuit $\bigvee_{k = 0}^n D_k(f)$ is $\dDNNF$ representing $f$, there must be a function $\kappa : \calF \rightarrow \mathbb{N}$ such that $|D_{\kappa(f)}(f)| = 2^{\Omega(\sqrt{|var(f)|})}$ holds for all $f$. Define the class $\calF^* = \{f \land [w(\cdot) = \kappa(f)] \mid f \in \calF\}$. 

We claim that in any $\wDNNF$ representing a satisfiable function in $\calF^*$, all term subcircuits have the same variables. Consider a $\wDNNF$ representing $f \land [w(\cdot) = k]$. Let $T$ be one of its term subcircuit and assume $\var(f) \setminus \var(T) \neq \emptyset$. Let $x \in \var(f) \setminus \var(T)$, $T$ has a model $a$ with $x$ set to 0 and another model $a'$ identical to $a$ but with $x$ set to 1. But $w(a) \neq w(a')$ so $a$ and $a'$ cannot both satisfy $f \land [w(\cdot) = k]$, a contradiction. So all term subcircuits contain all variables.

Combining the above and
Proposition~\ref{proposition:proof_trees_on_same_variables} for $\calF^*$, we get that there is a polynomial relating  the $\dDNNF$- and $\dwDNNF$-sizes of functions of $\calF^*$. So all functions of $\calF^*$ have exponential $\dwDNNF$-size. Since DNNF support polynomial time restriction to models of fixed-weight -- see for instance the proof of~\cite[Proposition 4.1]{AmarilliBJM17} which can easily be adapted to DNNF -- the functions in $\calF^*$ also have polynomial $\DNNF$-size. So the class $\calF^*$ gives us $\dwDNNF \nleq \DNNF$.
\end{proof}

\begin{lemma}
$\DNNF \nleq \dwDNNF$.
\end{lemma}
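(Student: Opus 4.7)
The plan is to exhibit a family of Boolean functions $\{f_n\}_n$ whose $\dwDNNF$-size is polynomial in $n$ while their $\DNNF$-size is $2^{\Omega(n)}$. Unlike Lemma~\ref{lemma:dednnfvsdnnf}, where the trick was to restrict to fixed-weight models so that Proposition~\ref{proposition:proof_trees_on_same_variables} applies and \emph{forces} the exponential $\dwDNNF$ blow-up, here I must go the opposite way: construct a function whose $\dwDNNF$ truly exploits weak decomposability, i.e., that genuinely shares variables across $\land$-children (with uniform polarity) in a way that stays deterministic, but which cannot be represented compactly by any $\DNNF$.

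The first step is the construction. I would look for $f_n$ of the form of a deterministic disjunction indexed by a small set of ``switch'' variables $\vec{s}$, where each branch selects a sub-formula on a common pool of variables $\vec{x}$ shared across branches. Concretely, I would take a function previously known to require exponential $\DNNF$-size -- for example one of the monotone 2-CNFs of BCMS14 used in Lemma~1, or a related hard-for-$\DNNF$ primitive -- and augment it with auxiliary variables whose role is solely to witness, for every satisfying assignment of the underlying hard function, a unique ``reason'' for acceptance. The augmentation makes the resulting function deterministic without killing the combinatorial obstruction that blocks decomposable representations, and the shared $\vec{x}$-variables allow a $\dwDNNF$ of polynomial size built around a chain-of-$\land$-links analogous to the one used in the proof of Proposition~\ref{proposition:proof_trees_on_same_variables}. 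I would then verify by a direct walkthrough that the circuit I build is (i) weakly decomposable (each shared variable appears with a single polarity below every $\land$-node) and (ii) deterministic (distinct term subcircuits correspond to distinct settings of the auxiliary witness variables).

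For the $\DNNF$ lower bound, the plan is to use a rectangle-cover argument in the spirit of~\cite{BovaCMS14,BovaCMS16}: show that the models of $f_n$, viewed as subsets of $\{0,1\}^{X_1}\times\{0,1\}^{X_2}$ for some balanced partition $(X_1,X_2)$ of the variables, admit no cover by $2^{o(n)}$ combinatorial rectangles, which by the standard DNNF $\Rightarrow$ rectangle-cover reduction yields $|C|=2^{\Omega(n)}$ for every $\DNNF$ $C$ computing $f_n$. The witnesses added to restore determinism must be positioned so that projecting away those auxiliary variables recovers (a function Karchmer--Wigderson--closely related to) the hard underlying object, making the lower bound for the augmented function follow from that of the core function.

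The hardest step will be the construction itself. The tension is that most mechanisms for enforcing high rectangle-cover lower bounds on $\DNNF$ produce functions that are symmetric in a way inimical to determinism, while most ways to make a function cheaply deterministic (e.g., making models unique via an OBDD-like skeleton) also give a cheap $\DNNF$. Threading this needle -- keeping the shared-variable structure essential enough that no $\DNNF$ can escape it, yet organizing satisfying assignments along a small deterministic skeleton -- is where I expect the subtlety to lie, and it is the step whose technical details I would not try to guess without writing out the candidate construction carefully.
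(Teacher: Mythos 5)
Your strategy is the right one and is in fact the paper's: start from the monotone 2-CNFs of \cite{BovaCMS14} that witness $\DNNF \nleq \CNF$, add auxiliary selector variables to obtain determinism while keeping the shared positive $x$-variables (so only \emph{weak} decomposability is needed), and transfer the $\DNNF$ lower bound from the core function to the augmented one by projecting the auxiliaries away. But the proposal stops exactly where the proof begins: you explicitly decline to write down the construction, and without it there is no proof. The gap is real but, fortunately, the needle you are worried about threading is much easier to thread than you fear. Given $F = \bigwedge_{k=1}^m (x_{k_0} \lor x_{k_1})$ from $\calF$, take one fresh variable $z_k$ per clause and set
$F' = \bigwedge_{k=1}^m \bigl((\neg z_k \land x_{k_0}) \lor (z_k \land x_{k_1})\bigr)$.
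Each clause-disjunction is deterministic because its two disjuncts disagree on $z_k$; the inner $\land$-nodes are decomposable; and the outer conjunction is weakly decomposable because the only variables shared between clauses are the $x_i$, which occur only positively since $F$ is monotone. So $F'$ is a $\dwDNNF$ of size $O(m)$. The ``tension'' you describe dissolves precisely because determinism is enforced locally, per clause, by a private selector, while the global obstruction to decomposability (the shared positive $x$-variables) is untouched.

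For the lower bound, you do not need to redo a rectangle-cover argument on the augmented function: since $\exists z_k.\,\bigl((\neg z_k \land x_{k_0}) \lor (z_k \land x_{k_1})\bigr) \equiv x_{k_0} \lor x_{k_1}$ and the $z_k$ are independent across clauses, $\exists Z.\,F' \equiv F$, and $\DNNF$ supports polynomial-time forgetting \cite{DarwicheM02}. Hence a polynomial-size $\DNNF$ for $F'$ would yield one for $F$, contradicting \cite{BovaCMS14}. This forgetting reduction is exactly the ``projection'' step you gesture at, but made precise; it lets you reuse the known $\DNNF$ lower bound as a black box instead of re-proving a rectangle bound for $F'$, which would be both harder and unnecessary. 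With the explicit $F'$ and this reduction filled in, your outline becomes the paper's proof.
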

\begin{proof}
We consider the class $\calF$ of monotone 2-CNF used in~\cite{BovaCMS14} to prove $\DNNF \nleq \CNF$. Let $F$ be a monotone 2-CNF from $\calF$ on $n$ variables $x_1,\dots,x_n$, $F = \bigwedge_{k = 1}^m (x_{k_0} \lor x_{k_1})$. The size of $F$ is polynomial in $n$ while its equivalent DNNF have size exponential in $n$. Now consider $m$ fresh variables $Z = \{z_1,\dots,z_m\}$ and define $F' = \bigwedge_{k = 1}^m ((\neg z_k \land x_{k_0}) \lor (z_k \land x_{k_1}))$. $F'$ is a $\dwDNNF$, and $\exists Z. F' \equiv F$ ($F$ equals $F'$ after forgetting variables $Z$, see~\cite{DarwicheM02} if needed). Since DNNF support polynomial time variables forgetting~\cite{DarwicheM02}, DNNF circuits equivalent to $F'$ have exponential size. Thus the class of the circuits $\{F' | F \in \calF \}$ proves the separation $\DNNF \nleq \dwDNNF$.
\end{proof}

\begin{lemma} $\wDNNF < \dwDNNF$.
\end{lemma}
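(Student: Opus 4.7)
The plan is to split $\wDNNF < \dwDNNF$ into its two halves. The direction $\wDNNF \leq \dwDNNF$ is immediate from the inclusion $\dwDNNF \subseteq \wDNNF$: every deterministic weakly decomposable NNF is in particular a weakly decomposable NNF of the same size. The substantive half is $\dwDNNF \nleq \wDNNF$, for which I would exhibit a family of Boolean functions whose $\wDNNF$-size is polynomial while their $\dwDNNF$-size is exponential.

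For the separating family, I would reuse the class $\calF^*$ introduced in the proof of Lemma~\ref{lemma:dednnfvsdnnf}, i.e., the functions of the form $f \land [w(\cdot) = \kappa(f)]$ for $f$ ranging over Sauerhoff's family. That proof already shows two facts about $\calF^*$: that every function in $\calF^*$ admits a $\DNNF$ of polynomial size (obtained by restricting a polynomial $\DNNF$ for $f$ to models of the chosen weight), and that any $\dwDNNF$ computing such a function has size $2^{\Omega(\sqrt{n})}$. Since $\DNNF \subseteq \wDNNF$, the first fact gives a polynomial-size $\wDNNF$ for every member of $\calF^*$ for free, while the second fact is unchanged.

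Combining the two bounds, the family $\calF^*$ witnesses $\dwDNNF \nleq \wDNNF$, completing the strict succinctness $\wDNNF < \dwDNNF$. There is essentially no new obstacle beyond what was handled in Lemma~\ref{lemma:dednnfvsdnnf}; the present statement is really a corollary obtained by weakening the upper bound from $\DNNF$-size to $\wDNNF$-size (which is only more favourable, since $\DNNF$-size dominates $\wDNNF$-size) while keeping the lower bound on $\dwDNNF$-size exactly as before. The only point worth being careful about is to explicitly invoke $\DNNF \subseteq \wDNNF$ so that the polynomial upper bound transfers to the correct class.
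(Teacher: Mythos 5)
Your proof is correct and rests on exactly the same ingredients as the paper's: the inclusion $\dwDNNF\subseteq\wDNNF$ for the easy direction, and Lemma~\ref{lemma:dednnfvsdnnf} together with $\DNNF\subseteq\wDNNF$ for the strictness. The paper packages this as a two-line transitivity/contradiction argument (if $\dwDNNF\simeq\wDNNF$ then $\dwDNNF\leq\DNNF$, contradicting Lemma~\ref{lemma:dednnfvsdnnf}), whereas you unfold it by exhibiting the witness family $\calF^*$ explicitly; this is the same proof in concrete rather than abstract form.
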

\begin{proof}
$\dwDNNF \subset \wDNNF$ implies $\wDNNF \leq \dwDNNF$. For $\dwDNNF \nleq \wDNNF$ observe that otherwise we would have $\dwDNNF \simeq \wDNNF$, which would imply $\dwDNNF \leq \DNNF$, thus contradicting Lemma~\ref{lemma:dednnfvsdnnf}.
\end{proof}

\begin{lemma} $\dwDNNF < \dDNNF$ and $\dwDNNF < \sdwDNNF$.
\end{lemma}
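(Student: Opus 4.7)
The plan is to deduce both separations entirely from facts already assembled in the excerpt; no new combinatorial construction is required. The non-strict halves of the two inequalities are just class inclusions: every $\dDNNF$ is automatically a $\dwDNNF$ (decomposability is the special case of weak decomposability in which the intersection of scopes of the children of a $\land$-node is empty), which gives $\dwDNNF \leq \dDNNF$; and every $\sdwDNNF$ is in particular a $\dwDNNF$ (smoothness is an extra restriction on $\lor$-nodes), which gives $\dwDNNF \leq \sdwDNNF$.

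For the strict separation $\dDNNF \nleq \dwDNNF$, I would simply invoke Lemma~\ref{lemma:dednnfvsdnnf} and then use an inclusion argument. That lemma exhibits a family $\calF^*$ whose members admit polynomial-size $\dwDNNF$ representations while their $\DNNF$-size is $2^{\Omega(\sqrt{n})}$. Since $\dDNNF \subseteq \DNNF$, the $\dDNNF$-size of any function is at least its $\DNNF$-size, so the very same family $\calF^*$ also has super-polynomial $\dDNNF$-size. Hence $\dDNNF \nleq \dwDNNF$, and combined with the inclusion above this yields $\dwDNNF < \dDNNF$.

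For $\sdwDNNF \nleq \dwDNNF$, I would argue by transitivity. The earlier lemma giving $\sdwDNNF \simeq \sdDNNF$ together with the known right-face equivalence $\sdDNNF \simeq \dDNNF$ (from Darwiche and Marquis, recalled at the start of Section~3.3) yields $\sdwDNNF \simeq \dDNNF$. If we also had $\sdwDNNF \leq \dwDNNF$, then transitivity with $\dDNNF \leq \sdwDNNF$ would give $\dDNNF \leq \dwDNNF$, directly contradicting the separation just proved. Hence $\sdwDNNF \nleq \dwDNNF$, and together with the trivial inclusion we obtain $\dwDNNF < \sdwDNNF$.

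Since every step is either a routine inclusion of classes or an appeal to an already-proved succinctness relation followed by transitivity, there is essentially no technical obstacle in this lemma; its role is simply to close the two remaining arrows of the cube of Figure~\ref{fig:map_DNNF} by combining the work already done in the preceding lemmas.
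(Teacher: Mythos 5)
Your two easy inclusions and your treatment of $\sdwDNNF \nleq \dwDNNF$ (via $\sdwDNNF \simeq \sdDNNF \simeq \dDNNF$ and transitivity) coincide with the paper's proof. But the first strict separation contains a genuine error: you have inverted the content of Lemma~\ref{lemma:dednnfvsdnnf}. That lemma states $\dwDNNF \nleq \DNNF$, and its witness family $\calF^*$ has \emph{polynomial} $\DNNF$-size and \emph{exponential} $\dwDNNF$-size --- exactly the opposite of what you assert. So $\calF^*$ cannot serve as a family with small $\dwDNNF$ representations and large $\dDNNF$ representations; it is a witness for the wrong direction, and $\dDNNF \nleq \dwDNNF$ does not follow from Lemma~\ref{lemma:dednnfvsdnnf} alone (indeed, $\dDNNF \leq \dwDNNF$ together with $\dwDNNF \nleq \DNNF$ yields no contradiction, since $\dDNNF \nleq \DNNF$ is true anyway).

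The separation you actually need is $\DNNF \nleq \dwDNNF$, which the paper establishes in the lemma immediately following Lemma~\ref{lemma:dednnfvsdnnf}, using the family of polynomial-size $\dwDNNF$ circuits $F' = \bigwedge_{k}((\neg z_k \land x_{k_0}) \lor (z_k \land x_{k_1}))$ whose equivalent $\DNNF$ are exponential. Once you cite that lemma instead, your argument is exactly the paper's: if $\dDNNF \leq \dwDNNF$ held, then $\DNNF \leq \dDNNF \leq \dwDNNF$ by transitivity (equivalently, in your phrasing, the $\dDNNF$-size of the $F'$ dominates their $\DNNF$-size, so a small $\dwDNNF$-to-$\dDNNF$ translation would yield a small $\DNNF$), contradicting $\DNNF \nleq \dwDNNF$. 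The repair is a one-line substitution, but as written the proof rests on a false characterization of $\calF^*$ and an appeal to a lemma that does not imply the claim.
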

\begin{proof}
$\dDNNF$ is a subclass of $\dwDNNF$ so $\dwDNNF \leq \dDNNF$. And $\dDNNF \nleq \dwDNNF$ holds for otherwise $\DNNF \nleq \dwDNNF$ would be violated by transitivity.

$\sdwDNNF$ is a subclass of $\dwDNNF$ so $\dwDNNF \leq \sdwDNNF$. Since $\sdwDNNF$, $\sdDNNF$ and $\dDNNF$ are equally succinct, there must be $\sdwDNNF \nleq \dwDNNF$ otherwise $\dDNNF \nleq \dwDNNF$ would be violated by transitivity.
\end{proof}

This last lemma finishes the proof of Theorem~\ref{theorem:succinctness_map}.

\section{Beginning the Map for Positive AC}

In this section, we will start drawing a succinctness map for positive AC. Recall that positive AC compute non-negative functions but allow for negative constant inputs or equivalently subtraction. It is known that adding subtraction to arithmetic circuits can decrease their size exponentially~\cite{Valiant80}, so $\posAC < \monAC$.

Since there is no apparent mapping between positive AC and a class of Boolean circuits similar to the mapping $\phi$ introduced in Section~\ref{section:phi}, we do not obtain a succinctness map for positive AC in the same way we did for monotone AC.
We here solve some of the relations on the corresponding map, leaving its completion for future work.

\begin{lemma}\label{lemma:deterministic_positive}
Let $C$ be a (smooth) (weakly) decomposable deterministic positive $\AC$. Switching the signs of all negative constants in $C$ yields an equivalent (smooth) (weakly) decomposable deterministic monotone $\AC$. Therefore the relation $\textup{d-}\gamma\textup{-}\posAC \simeq \textup{d-}\gamma\textup{-}\monAC$ holds for any $\gamma \in \{\textup{D,wD,sD,swD}\}$.
\end{lemma}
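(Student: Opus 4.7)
The plan is to reduce the equivalence claim to a term-by-term argument via Lemma~1. Since $C$ is deterministic and (weakly) decomposable, Lemma~1 asserts that distinct term subcircuits of $C$ have disjoint supports, and by distributivity $C \equiv \sum_T T$, with $T$ ranging over all term subcircuits. Each such $T$ computes a monomial $\alpha_T \prod \ell_{x_i}$, where $\alpha_T$ is the product of the constant inputs of $T$ counted with their path-multiplicity from the source of $T$ (since $T$ is a sub-DAG rather than a tree, a constant sink may be reached by more than one path within $T$).

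Fix $T$ with $\supp(T) \neq \emptyset$ and pick $a \in \supp(T)$. Disjointness of supports gives $C(a) = T(a)$; since all literals in $T$ evaluate to $1$ on $a$, this reduces to $C(a) = \alpha_T$, so positivity of $C$ forces $\alpha_T > 0$. Let $C'$ be the circuit obtained from $C$ by flipping the sign of every negative constant sink; its underlying DAG equals that of $C$, so the term subcircuits of $C'$ are in bijection with those of $C$, and writing $T'$ for the partner of $T$ we have $\alpha_{T'} = \prod_v |c_v|^{m_v}$, hence $\alpha_T = (-1)^{k(T)} \alpha_{T'}$ with $k(T) = \sum_{v : c_v < 0} m_v$. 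Because $\alpha_T > 0$, the integer $k(T)$ must be even, so $\alpha_T = \alpha_{T'}$ and $T \equiv T'$. For terms with empty support either some sink of $T$ has label $0$ (unchanged by flipping, so $\alpha_{T'} = 0$ and $T \equiv T' \equiv 0$) or the literals of $T$ are inconsistent (so $T$ and $T'$, sharing literal inputs, both compute the zero function). Summing through the bijection $T \leftrightarrow T'$ yields $C \equiv C'$. Since $C'$ has the same graph as $C$, it inherits any smoothness, (weak) decomposability, and determinism present in $C$; since all its constants are non-negative, it is monotone.

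The succinctness claim then follows: the inclusion $\textup{d-}\gamma\textup{-}\monAC \subseteq \textup{d-}\gamma\textup{-}\posAC$ is immediate for every $\gamma \in \{\textup{D, wD, sD, swD}\}$, while the construction above provides a size-preserving translation in the opposite direction, giving $\textup{d-}\gamma\textup{-}\posAC \simeq \textup{d-}\gamma\textup{-}\monAC$. The one delicate point to handle carefully is the path-multiplicity $m_v$: because of DAG-sharing, a single constant sink may be visited several times inside one term subcircuit, so the parity argument for $(-1)^{k(T)}$ must count paths, not sinks. Positivity of $C$ combined with determinism and (weak) decomposability is exactly what forces this parity to be even and therefore makes the global sign-flipping invisible at the level of the computed function.
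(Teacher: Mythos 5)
Your proof is correct and follows essentially the same route as the paper's: decompose $C$ into term subcircuits, use determinism to evaluate $C$ at a point of each term's support, deduce that the (multiplicity-counted) number of negative constants in each term is even, and conclude that the sign flip is invisible; your explicit treatment of path multiplicities and of terms with empty support is in fact more careful than the paper's own write-up. The one step that does not stand as written is the claim that determinism is ``inherited'' because $C'$ has the same graph as $C$: determinism is a semantic condition on the supports of the two children of each $+$-node, not a graph property, so you still need that $\supp(g)$ is unchanged for every internal node $g$. This does follow from your own term-level analysis applied to each subcircuit $C_g$ (which need not be positive, but whose terms keep their literals and satisfy $\alpha_{T'} = |\alpha_T|$, so each term's support --- and, by determinism, their disjoint union $\supp(g)$ --- is preserved); this is essentially what the paper means when it notes that the term subcircuits still have pairwise disjoint supports.
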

\begin{proof}
Smoothness and (weak) decomposability are clearly preserved by the transformation. Recall that no two term subcircuits of $C$ can compute an non-zero value on the same assignment, and that the sum of the functions they compute is that computed by~$C$. So each term subcircuit~$T$ computes a positive function, therefore the negative constants in~$T$ must be in even number. But then switching the signs of negative constants in $C$ does not change the function computed by any term subcircuit. Thus the monotone AC we get is equivalent to $C$ and, since its term subcircuits still have pairwise disjoint support, it is deterministic.
\end{proof}

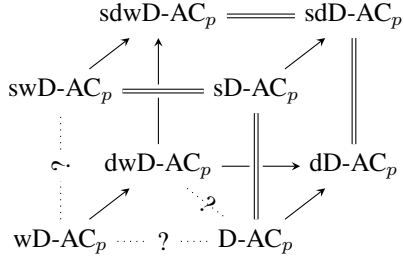
\begin{figure}[t]
\centering
\begin{tikzpicture}
\def\x{2.6};
\def\y{2};
\def\dx{0.5*\x};
\def\dy{0.5*\y};

\node (wD) at (0,0) {$\wDposAC$};
\node (D) at (\x,0) {$\DposAC$};
\node (dwD) at (\dx,\dy) {$\dwDposAC$};
\node (dD) at (\x+\dx,\dy) {$\dDposAC$};
\node (swD) at (0,\y) {$\swDposAC$};
\node (sdwD) at (\dx,\y+\dy) {$\sdwDposAC$};
\node (sdD) at (\x+\dx,\y+\dy) {$\sdDposAC$};
\node (sD) at (\x,\y) {$\sDposAC$};

\draw[-stealth] (dwD) -- (dD);
\draw[-stealth] (dwD) -- (sdwD);
\draw[double distance=1pt] (dD) -- (sdD);
\draw[double distance=1pt] (sdwD) -- (sdD);

\draw[-stealth] (swD) -- (sdwD);
\draw[-stealth] (sD) -- (sdD);
\draw[white,line width=5pt] (swD) -- (sD);
\draw[double distance=1pt] (swD) -- (sD);

\draw[dotted] (wD) -- node[fill=white, midway, rotate=90] {?} (swD);
\draw[-stealth] (wD) -- (dwD);
\draw[white,line width=5pt] (D) -- (sD);
\draw[double distance=1pt] (D) -- (sD);
\draw[-stealth] (D) -- (dD);
\draw[dotted] (wD) -- node[fill=white, midway] {?} (D);
\draw[dotted] (dwD) -- node[midway, rotate=-45] {?} (D);

\end{tikzpicture}
\caption{Partial succinctness map for subclasses of positive AC}\label{fig:map_positive_AC}
\end{figure}

\begin{lemma}
$\gamma\textup{-}\posAC < \textup{d-}\gamma\textup{-}\posAC$ for any $\gamma \in \{\textup{D,w,sD,swD}\}$.
\end{lemma}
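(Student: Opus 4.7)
The plan is to reduce the desired strict separation in the positive setting to the corresponding strict separation in the monotone setting, which is already established by Theorem~\ref{theorem:succinctness_map}. The key enabling observation is Lemma~\ref{lemma:deterministic_positive}, which shows that determinism collapses positive onto monotone, i.e., $\textup{d-}\gamma\textup{-}\posAC \simeq \textup{d-}\gamma\textup{-}\monAC$ for every $\gamma \in \{\textup{D,wD,sD,swD}\}$.

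First I would dispose of the easy direction $\gamma\textup{-}\posAC \leq \textup{d-}\gamma\textup{-}\posAC$: adding determinism merely shrinks the class, so any circuit in $\textup{d-}\gamma\textup{-}\posAC$ is already in $\gamma\textup{-}\posAC$ and the identity map gives the polynomial simulation. The real content is the non-inclusion $\textup{d-}\gamma\textup{-}\posAC \nleq \gamma\textup{-}\posAC$, which amounts to producing a family of functions $(f_n)_{n \in \mathbb{N}}$ whose $\gamma\textup{-}\posAC$-size is polynomial in~$n$ while their $\textup{d-}\gamma\textup{-}\posAC$-size is super-polynomial.

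For the witness family I would simply reuse the one whose existence is guaranteed by Theorem~\ref{theorem:succinctness_map} to separate $\gamma\textup{-}\monAC$ from $\textup{d-}\gamma\textup{-}\monAC$: such a family $(f_n)$ has polynomial $\gamma\textup{-}\monAC$-size and exponential $\textup{d-}\gamma\textup{-}\monAC$-size. Since every monotone AC is in particular a positive AC, we have $\gamma\textup{-}\monAC \subseteq \gamma\textup{-}\posAC$, which immediately provides the polynomial upper bound on the $\gamma\textup{-}\posAC$-size of $f_n$. For the matching lower bound, Lemma~\ref{lemma:deterministic_positive} allows us to convert any $\textup{d-}\gamma\textup{-}\posAC$ circuit for $f_n$ into a $\textup{d-}\gamma\textup{-}\monAC$ circuit of the same size simply by flipping the signs of negative constants; so a sub-exponential $\textup{d-}\gamma\textup{-}\posAC$ circuit would contradict the exponential lower bound on the $\textup{d-}\gamma\textup{-}\monAC$-size of $f_n$. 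This yields $\textup{d-}\gamma\textup{-}\posAC \nleq \gamma\textup{-}\posAC$ and completes the strict separation.

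There is no real obstacle here; the whole argument is a mechanical combination of Lemma~\ref{lemma:deterministic_positive} with Theorem~\ref{theorem:succinctness_map}. The conceptual point worth stressing is that the presence of determinism wipes out any possible exponential gap between positive and monotone AC, so the strict separation already available in the monotone map lifts to the positive setting without requiring any new lower-bound technique for positive circuits.
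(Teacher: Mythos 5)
Your proof is correct and takes essentially the same route as the paper: the paper compresses your witness-family argument into the single chain $\gamma\textup{-}\posAC \leq \gamma\textup{-}\monAC < \textup{d-}\gamma\textup{-}\monAC \simeq \textup{d-}\gamma\textup{-}\posAC$, relying on exactly the same two ingredients (the monotone separation from Theorem~\ref{theorem:succinctness_map} and the sign-flipping collapse of Lemma~\ref{lemma:deterministic_positive}).
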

\begin{proof}
Monotone AC are positive AC so $\gamma\textup{-}\posAC \leq \gamma\textup{-}\monAC$. Using Lemma~\ref{lemma:deterministic_positive} and Theorem~\ref{theorem:succinctness_map}, we get $\gamma\textup{-}\posAC \leq \gamma\textup{-}\monAC < \textup{d-}\gamma\textup{-}\monAC \simeq \textup{d-}\gamma\textup{-}\posAC$ and hence the result.
\end{proof}

\begin{lemma}
$\DposAC \simeq \sDposAC \simeq \swDposAC$.
\end{lemma}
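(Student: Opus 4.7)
The plan is to establish the two equivalences via four succinctness inequalities, two of which follow immediately from class inclusions and two of which require explicit polynomial-size transformations. Since $\sDposAC \subseteq \DposAC$ (smoothness is an extra restriction) and $\sDposAC \subseteq \swDposAC$ (decomposability is stronger than weak decomposability), the identity translation yields $\DposAC \leq \sDposAC$ and $\swDposAC \leq \sDposAC$. It remains to prove $\sDposAC \leq \DposAC$ (a smoothing step) and $\sDposAC \leq \swDposAC$ (a de-weakening step).

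For the smoothing step, given a decomposable positive AC $C$ on $n$ variables, for each $+$-node $g$ with $\var(g_l) \neq \var(g_r)$ and each variable $x \in \var(g_l) \setminus \var(g_r)$, I would replace $g_r$ by the $\times$-node whose children are $g_r$ and a gadget computing $x + \overline{x}$, and symmetrically for variables in the other difference. Since $x + \overline{x}$ equals $1$ on every $0/1$ assignment, the computed function is preserved, and positivity along with it. Since $x \notin \var(g_r)$ when the gadget is inserted, the new $\times$-node is decomposable; sharing a single gadget per variable keeps the blow-up at $O(n|C|)$.

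For the de-weakening step, the plan is to adapt the algorithm of Peharz et al.~\shortcite{PeharzTPD15}, which the paper has already adapted to Boolean circuits in order to prove $\sDNNF \leq \swDNNF$. That algorithm rewrites every $\times$-node whose children share variables, using the unique-polarity guarantee of weak decomposability together with the $0/1$ identity $\ell_x \cdot \ell_x = \ell_x$. The construction is purely syntactic and never inspects the values of constant inputs, so it transfers unchanged from monotone to positive AC; positivity of the output is then automatic, since the computed function is preserved.

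The main obstacle lies in this last step: I must confirm that no part of Peharz et al.'s procedure, originally designed for non-negatively weighted SPN, implicitly relies on the sign of constants (for instance when bounding the size of intermediate gadgets or when arguing termination). Because the transformation is combinatorial at its core and touches only the graph structure, I expect the verification to be routine, but it is the one place where care is needed before the two non-trivial inequalities, and hence the lemma, follow.
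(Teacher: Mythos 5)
Your proposal is correct and follows essentially the same route as the paper: the two trivial directions from class inclusion, smoothing via an $x+\overline{x}$ gadget multiplied onto the deficient child of each non-smooth $+$-node (giving the same $O(n|C|)$ bound), and the observation that the Peharz et al.\ transformation from smooth weakly decomposable to smooth decomposable circuits is purely syntactic and hence sound when negative constants are allowed. The paper's proof is likewise only a sketch at this last step, so your flagged verification point is shared with the original argument.
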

\begin{proof}[Proof sketch] The algorithm of~\cite{PeharzTPD15} works on AC with non-negative constants but remains sound (with no change) when negative constants are allowed. So smooth weakly decomposable positive AC can be made smooth and decomposable in polynomial time, hence $\sDposAC \leq \swDposAC$. Since $\sDposAC$ is also a subclass of $\swDposAC$, the second succinctness equivalence holds.

$\DposAC \simeq \sDposAC$: there only is $\sDposAC \leq \DposAC$ to prove. Let $C \in \DposAC$, if $g \in C$ is a $+$-node such that $x \in \var(g_r)$ and $x \not\in \var(g_l)$, then add a $\times$-node between $g_l$ and $g$ whose successors are $g_l$ and \raisebox{-0.5\height}{
\begin{tikzpicture}
\node[inner sep=1,font=\footnotesize] (g) at (0,0) {$+$};
\node[inner sep=1,font=\footnotesize] (l) at (-0.4,-0.5) {$x$};
\node[inner sep=1,font=\footnotesize] (r) at (0.4,-0.5) {$\overline{x}$};

\draw (g) -- (l);
\draw (g) -- (r);
\end{tikzpicture}
}
This does not impact decomposability. Inserting $\times$-nodes this way for each non smooth $+$-nodes yields a smooth decomposable AC equivalent to $C$ of size at most $O(|\var(C)| \times |C|)$.
\end{proof}

The above lemmas are summarized in Figure~\ref{fig:map_positive_AC}. Three relations, indicated by question marks in the figure are open. Note that the known relations between classes of positive AC coincide with the corresponding relations between classes of monotone AC,  which motivates the following question:

\begin{question}\label{question:one}
Do all succinctness relations between classes of monotone $\AC$ shown Figure~\ref{fig:map_monotone_AC} hold for the corresponding classes of positive $\AC$ as well?
\end{question} 

Note that completing the map for positive $\AC$ might be very hard: in fact, it would in particular require showing strong lower bounds for $\DposAC$, a well-known open problem in complexity theory for which the best current result is a recent nearly quadratic lower bound~\cite{AlonKV20}. Another question is the relations between the map of monotone $\AC$ and that of positive $\AC$: when imposing determinism, the expressive power of positive $\AC$ is exactly that of monotone $\AC$, while for unrestricted circuits it is known that positive AC are more succinct than monotone AC~\cite{Valiant80}.

\begin{question}\label{question:two}
For which $\gamma \in \{\textup{D,wD,sD,swD}\}$ do we have $\gamma\textup{-}\posAC < \gamma\textup{-}\monAC$?
\end{question} 

\section{Lower Bounds for Positive $\AC$}

\subsection{Sum of Decomposable Products}

In this section we describe a technique to show lower bounds on the size of \emph{structured}-decomposable positive AC. For NNF, structured decomposability is defined with help of a v-tree (variable tree)~\cite{PipatsrisawatD08} but the definition usually assumes that constant inputs have been propagated away in the circuit. This is impossible in our model, so we use the v-tree-free definition from~\cite{VergariCLTB21}. The definition assumes smoothness for simplicity.

\begin{definition}
An AC $C$ is called smooth \emph{structured-decomposable} when it is smooth and decomposable and, for all $Y \subseteq \var(C)$ there is a partition $Y = Y_0 \cup Y_1$ such that, the successors of all $\times$-nodes $g$ in $C$ with $\var(g) = Y$ verify $\var(g_l) = Y_i$ and $\var(g_r) = Y_{1-i}$ for some $i \in \{0,1\}$.
\end{definition}

\begin{definition}
Let $Z$ be a set of variables. A \emph{decomposable product} over $Z$ is a function from $Z$ to $\mathbb{R}$ that can be written as a product $f(X) \times h(Y)$ where $(X,Y)$ is a partition of $Z$ and $f$ and $h$ are functions to $\mathbb{R}$. The decomposable product is called \emph{balanced} when $\frac{|Z|}{3} \leq |X|,|Y| \leq \frac{2|Z|}{3}$.
\end{definition}
A common approach to proving lower bounds for decomposable $\AC$ analyzes representations of the function it computes in terms of sums of balanced decomposable products. Roughly put, the idea is that when more summands are needed in such a representation, $\AC$ for it need to be larger. This technique has been used in recent and not so recent articles, see e.g.~\cite{Valiant80,RazY11,MartensM14}. Translated to Boolean circuits, decomposable products correspond to \emph{combinatorial rectangles}, a tool from communication complexity used in the context of DNNF~\cite{BovaCMS16}.

Variations of the next theorem have been shown several times independently in the literature, see for instance~\cite[Theorem 38]{MartensM14}. The structured case comes from an easy modification of that proof, the rough idea is that each decomposable product is built from a different node of the circuit and, thanks to structuredness, all these nodes have the same set of variables, which eventually yields the same partition for the decomposable products.

\begin{theorem}\label{theorem:sum_of_decomposable_functions} Let $F$ be a non-negative function on $0/1$-variables computed by a decomposable smooth $\AC$ $C$
. Then~$F$ can be written as a sum of $N$ balanced decomposable products over $\var(F)$, with $N \leq |C|$  in the form\footnote{Note that~\cite[Theorem 38]{MartensM14} is stated with $N \leq |C|^2$ because the internal nodes in their AC (or SPN) do not have exactly two successors, as do ours. However, they reduce to AC with that property and the square comes from the quadratic size increase in this reduction.}.
\begin{align*}F = \sum_{i = 1}^N f_i(X_i) \times h_i(Y_i).\end{align*}
If $C$ is structured, the $N$ partitions $(X_i,Y_i)$ are all identical.
\end{theorem}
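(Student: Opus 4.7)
The plan is to follow the balanced-frontier argument classical for smooth decomposable arithmetic circuits, specialised to the structured setting. Write $V = \var(F)$ and $n = |V|$. First I would isolate the sub-DAG $C^{V}$ of $C$ formed by the nodes $g$ with $\var(g) = V$. Smoothness forces every $+$-node with scope $V$ to have both children with scope $V$, while decomposability forces every $\times$-node with scope $V$ to split $V$ disjointly between its children, whose scopes are therefore strict subsets of $V$. Hence $C^V$ is a $+$-only sub-DAG rooted at the source of $C$, whose leaves form a set $\mathcal{T}$ of ``top-level'' $\times$-nodes.

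Expanding the $+$-nodes of $C^V$ by distributivity yields a representation
\begin{align*}
F \;=\; \sum_{g \in \mathcal{T}} \mu(g)\, C_{g_l} \cdot C_{g_r},
\end{align*}
where $\mu(g) \geq 1$ counts the number of additive paths from the source to $g$. Each summand is a decomposable product over $V$ with partition $(\var(g_l), \var(g_r))$, so absorbing $\mu(g)$ into one of the factors gives $|\mathcal{T}| \leq |C|$ decomposable products summing to $F$. When a top-level $\times$-node produces an unbalanced partition of $V$, I would descend recursively into the (unique) large child's subcircuit: express its polynomial inductively as a sum of balanced products over that child's scope, then reattach the small sibling factor to one side of each product so that the resulting partition of $V$ lies in the balanced range $[n/3, 2n/3]$. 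A heavy-path style amortisation keeps the total count bounded by $|C|$.

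For the structured case, structured-decomposability assigns a unique partition $(V_0, V_1)$ to $V$, and every $\times$-node with scope $V$ --- in particular every $g \in \mathcal{T}$ --- must split $V$ according to this pair. Consequently each summand in the expansion above uses the same partition $(X_i, Y_i) = (V_0, V_1)$, giving the ``identical partitions'' conclusion immediately. In this case no rebalancing is needed; balancedness is an assumption on the underlying partition data of the v-tree, and it is preserved throughout, so the proof simplifies considerably.

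The main technical obstacle is the amortised counting in the rebalancing step of the unstructured case: verifying that the descent on unbalanced top-level $\times$-nodes produces at most $|C|$ balanced summands in total, rather than the $|C|^{2}$ count of~\cite{MartensM14}. The sharper bound relies essentially on the binary-gate convention used throughout the paper, as anticipated in the footnote.
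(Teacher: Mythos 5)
Your overall strategy is the classical one that the paper itself defers to (\cite[Theorem 38]{MartensM14} plus the structuredness remark), but as written it has a genuine gap exactly where you flag ``the main technical obstacle'': the bound $N \leq |C|$ is not established by your recursive rebalancing. If you descend into the large child \emph{separately for each} unbalanced top-level $\times$-node, the counts multiply rather than add: in a DAG where $k$ top-level $\times$-nodes share the same heavy child $u$ whose recursive expansion has $m$ summands, your procedure yields $km$ products, while $|C|$ is only about $k + |u|$. The standard way to get $N \leq |C|$ is to index summands by circuit \emph{nodes} rather than by products: for every node $v$ with $n/3 \leq |\var(v)| \leq 2n/3$ (such a node exists on the heavy path of every term subcircuit, since scopes shrink by at most half at each decomposable $\times$-step and are preserved at smooth $+$-steps), group all term subcircuits whose heavy path first enters the balanced range at $v$ and write their total contribution as $C_v(\var(v)) \times h_v(V \setminus \var(v))$, where $h_v$ aggregates all the ``outer parts'' (decomposability guarantees these depend only on $V \setminus \var(v)$). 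This gives one summand per such node, hence $N \leq |C|$ directly, with no amortisation. Two further points: your claim that every $\times$-node with scope $V$ has children with scopes that are \emph{strict} subsets of $V$ fails when one child is a constant (empty scope), so $C^V$ need not be $+$-only; and in the structured case your assertion that ``balancedness is an assumption on the underlying partition data of the v-tree'' is not supported by the paper's definition --- the root partition $(V_0,V_1)$ may be arbitrarily unbalanced, so the descent is still required. The structured conclusion instead follows because structuredness makes the scope hierarchy, and hence the first balanced scope $W$ reached by the descent, unique, so every summand uses the same partition $(W, V\setminus W)$.
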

\subsection{Lower bounds for structured decomposable positive AC}
In this section we prove the following lower bound.
\begin{proposition}\label{proposition:lower_bounds_str_SPN}
There is a class of positive functions $\mathcal{F}$ such that, 
for all $F \in \mathcal{F}$, the smallest AC computing $F$ has size polynomial in $|\var(F)|$ but the smallest smooth structured decomposable AC computing $F$ has size $2^{\Omega(|\var(F)|)}$.
\end{proposition}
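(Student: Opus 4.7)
The plan is to combine Theorem~\ref{theorem:sum_of_decomposable_functions} with a matrix rank argument. If a smooth structured-decomposable AC $C$ of size $s$ computes a non-negative function $F$, then by the structured case of Theorem~\ref{theorem:sum_of_decomposable_functions} there is a \emph{single} balanced partition $(X,Y)$ of $\var(F)$ and a representation $F = \sum_{i=1}^{N} f_i(X) h_i(Y)$ with $N \le s$. The real matrix $M_F^{(X,Y)}$ defined by $M_F^{(X,Y)}[a,b] = F(a \cup b)$ is therefore a sum of $N$ rank-one matrices, so $\mathrm{rank}(M_F^{(X,Y)}) \le s$. Accordingly, it suffices to exhibit positive functions $F$ that admit polynomial-size unrestricted AC and for which $\mathrm{rank}(M_F^{(X,Y)}) \ge 2^{\Omega(|\var(F)|)}$ holds for \emph{every} balanced partition $(X,Y)$.

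I would define $\mathcal{F}$ from a sequence of bounded-degree graphs $\{G_n\}$ on $n$ vertices, chosen so that the cut of every balanced bipartition of $V(G_n)$ contains an induced matching of size $\Omega(n)$. Such graphs can be obtained from random $d$-regular graphs for a sufficiently large constant $d$, via a probabilistic argument combining a standard estimate on the largest induced matching of a random bipartite subgraph with a union bound over the $2^{\Theta(n)}$ balanced partitions of $V(G_n)$. Then set
\[
F_n(x_1, \ldots, x_n) \;=\; \prod_{(i,j) \in E(G_n)} (1 + x_i x_j).
\]
This is a positive function computed by a straightforward monotone AC of size $O(|E(G_n)|) = O(n)$, yielding the polynomial upper bound.

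For the rank lower bound, fix any balanced partition $(X,Y)$ and pick an induced matching $M = \{(i_1,j_1), \ldots, (i_m,j_m)\}$ of size $m = \Omega(n)$ in the cut with $i_k \in X$ and $j_k \in Y$. Restricting $F_n$ by setting $x_v = 0$ for every $v \notin V(M)$ collapses every non-matching factor to $1$: an edge with an endpoint outside $V(M)$ contributes $(1+0)=1$, and by the induced-matching condition no non-matching edge has both endpoints in $V(M)$. Hence the restricted function equals $\prod_{k=1}^{m}(1 + x_{i_k}x_{j_k})$, whose value matrix over $\{0,1\}^{\{i_1,\dots,i_m\}} \times \{0,1\}^{\{j_1,\dots,j_m\}}$ is the tensor product of $m$ copies of the rank-$2$ matrix $\left(\begin{smallmatrix} 1 & 1 \\ 1 & 2 \end{smallmatrix}\right)$, and therefore has rank $2^m$. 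Since this is a submatrix of $M_{F_n}^{(X,Y)}$, we obtain $\mathrm{rank}(M_{F_n}^{(X,Y)}) \ge 2^m = 2^{\Omega(n)}$, as required.

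The main obstacle is the combinatorial ingredient, namely producing a single graph whose every balanced bipartition exhibits a large induced matching in its cut. The algebraic part (the tensor-product rank computation) is routine once the matching is in hand; the difficulty is controlling all $2^{\Theta(n)}$ partitions simultaneously. This can be handled by a union bound against a concentration estimate for induced matchings in random bipartite subgraphs of a random $d$-regular graph, or alternatively by appealing to explicit expander constructions with sufficiently strong induced-expansion properties.
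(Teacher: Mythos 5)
Your proposal is correct and follows essentially the same route as the paper's proof: Theorem~\ref{theorem:sum_of_decomposable_functions} plus the rank of the value matrix under the single balanced partition, a graph-based product function whose restriction to an induced matching across the cut yields an exponential-rank submatrix, and expander-type graphs to guarantee such a matching for every balanced partition. The one ingredient you flag as the ``main obstacle'' is exactly what the paper settles with $(c,3)$-expander graphs and a short deterministic greedy argument (Lemma~\ref{lemma:matchings_in_expander_graph}: expansion forces $\Omega(n)$ cut edges and $3$-regularity lets one keep a constant fraction of them as an induced matching, so no union bound over partitions is needed), while your choice of $\prod(1+x_ix_j)$ in place of the paper's $\prod(1+\max(x_i,x_j))$ is a harmless and arguably cleaner variant, since the non-matching factors collapse exactly to $1$ and the resulting submatrix is a Kronecker power of a rank-$2$ matrix rather than requiring the paper's determinant induction.
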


By Theorem~\ref{theorem:sum_of_decomposable_functions}, the smallest $N$ for which one can write~$F$ as $F = \sum_{i = 1}^N f_i(X) \times h_i(Y)$ where $f_i(X) \times h_i(Y)$ are balanced decomposable products for the unique partition $(X,Y)$ of $\var(F)$, is a lower bound on the size of all smooth structured decomposable AC computing $F$. Thus, proving Proposition~\ref{proposition:lower_bounds_str_SPN} boils down to finding non-negative functions where the smallest such $N$ depends exponentially on the number of variables.
\par Let us fix a function $F$ and a partition $(X,Y)$. The \emph{value matrix} of $F$ with respect to $(X,Y)$ is a $2^{|X|}\times 2^{|Y|}$ matrix~$M_F$ whose rows (resp. columns) are uniquely indexed by assignments to $X$ (resp. $Y$) and such that, for each pair of indices $(a_X, a_Y)$, the entry of $M_F$ at the $a_X$ row and $a_Y$ column is $F(a_X \cup a_Y)$.

\begin{lemma}\label{lemma:rank_lower_bound} 
Let $F = \sum_{k = 1}^N f_k(X) \times h_k(Y)$ where for all~$k$ we have $f_k \times h_k \neq 0$. Let $M_F$ be the value matrix for $F$ and let $M_i$ denote the the value matrix for $f_i \times h_i$ with respect to partition $(X,Y)$. Then
$$ \rk(M_F) \leq \sum_{k = 1}^N \rk(M_k) = N.$$
\begin{proof}
By construction, $M_F = \sum_{k = 1}^N M_k$, so $\rk(M_F) \leq \sum_{k = 1}^N \rk(M_k)$ holds by sub-additivity of the rank. We now show that $\rk(M_k) = 1$ holds for each $k$. Since $f_k \times h_k \neq 0$, there is a row in $M_k$ which is not a 0-row. Say it is the row indexed by $a_X$. Then the entries in that row are $f_k(a_X)\times h_k(a_Y)$ for varying $a_Y$. In any other rows indexed by $a'_X$, the entries are $f_k(a'_X)\times h_k(a_Y) = (f_k(a'_X) / f_k(a_X))\times f_k(a_X)\times h_k(a_Y)$ for varying $a_Y$. Consequently, all rows are multiples of the $a_X$-row, in other words, all rows of $M_k$ are linearly dependent, hence $\rk(M_k) =~1$.
\end{proof}
\end{lemma}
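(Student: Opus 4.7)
The plan is to split the claimed inequality into two independent pieces: the additive decomposition of value matrices plus subadditivity of rank, and a rank-one bound for each individual summand. The first observation I would make is that the value matrix construction is linear in the function it represents. Indeed, by applying the definition of $M_F$ entrywise and using $F(a_X \cup a_Y) = \sum_{k=1}^N f_k(a_X) h_k(a_Y)$, one obtains $M_F = \sum_{k=1}^N M_k$ as a genuine matrix identity. Combining this with the standard subadditivity of rank, $\rk(A + B) \le \rk(A) + \rk(B)$, immediately yields the first half of the inequality: $\rk(M_F) \le \sum_{k=1}^N \rk(M_k)$.

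The remaining and only nontrivial task is to show that $\rk(M_k) = 1$ for every $k$. Here my approach would be to recognize $M_k$ as an outer product. Write $u_k \in \mathbb{R}^{2^{|X|}}$ for the column vector whose $a_X$-th entry is $f_k(a_X)$, and $v_k \in \mathbb{R}^{2^{|Y|}}$ for the analogous vector from $h_k$. Then the $(a_X, a_Y)$ entry of $M_k$ is $f_k(a_X) h_k(a_Y) = (u_k v_k^\top)_{a_X, a_Y}$, so $M_k = u_k v_k^\top$. An outer product of two nonzero vectors is a rank-one matrix, and the hypothesis $f_k \times h_k \neq 0$ guarantees that neither $u_k$ nor $v_k$ is the zero vector: if one of them were, the product $f_k \times h_k$ would vanish identically. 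Hence $\rk(M_k) = 1$, and summing over $k$ gives the promised bound of $N$.

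The main obstacle, if there is one, is essentially bookkeeping rather than mathematics: one must be careful that the hypothesis $f_k \times h_k \neq 0$ is really used to exclude the degenerate case where $M_k = 0$ (otherwise the sum $\sum_k \rk(M_k)$ would not equal $N$ in general, only be bounded by it). Beyond that, the argument rests entirely on two textbook facts in linear algebra, subadditivity of rank and the rank-one characterization of outer products, so I do not anticipate any substantive difficulty.
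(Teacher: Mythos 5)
Your proof is correct and follows essentially the same route as the paper: decompose $M_F = \sum_k M_k$, apply subadditivity of rank, and show each $M_k$ has rank one. Your phrasing of the rank-one step as an outer product $u_k v_k^\top$ of nonzero vectors is a slightly cleaner packaging of the paper's ``all rows are multiples of a nonzero row'' argument (it avoids the division by $f_k(a_X)$), but it is the same idea.
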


Using Lemma~\ref{lemma:rank_lower_bound}, one sees that proving Proposition~\ref{proposition:lower_bounds_str_SPN} boils down to finding functions whose value matrices with respect to \emph{any} balanced partition $(X,Y)$ have rank exponential in the number of variables.

The functions we construct are based on graphs. Let $G = (V,E)$ be a graph, denote $n = |V|$ and, for each vertex~$v_i$ in~$V$, create a Boolean variable~$x_i$. 
We consider the function
\begin{equation}\label{eq:one_two_product_function}
F_G(x_1,\dots,x_n) = \prod_{(v_i,v_j) \in E} (1 + \max(x_i,x_j))
\end{equation}
Essentially, for each edge of $G$, if at least one of its end-points is assigned 1 in the assignment, then the edge contributes a factor 2 to the product, otherwise it contributes a factor 1. Regardless of the choice of $G$, the function $F_G$ has~a small positive AC: one just has to write $\max(x_i,x_j) = x_i + x_j - x_ix_j$ and see that the number of $\times$ and $+$ operations needed to compute $F_G$ is polynomial in $n$.

Recall that an \emph{induced matching} is a set $E' \subseteq E$ of edges with pairwise disjoint endpoints, whose set we denote $V'$,  such that all edges of $G$ connecting vertices in $V'$ are in~$E'$. 

\begin{lemma}\label{lemma:determinant_not_null}
Let $F_G$ be as described by~$(\ref{eq:one_two_product_function})$, let $(X,Y)$ be a partition of $var(F_G)$ and $(V_X,V_Y)$ be the corresponding partition of $V$. If there is an induced matching $m$ in $G$ between vertices $V_l$ and $V_r$ such that $V_l \subseteq V_X$ and $V_r \subseteq V_Y$, then 
$$
\rk(M_{F_G}) \geq 2^{|m|}
$$
where $M_{F_G}$ is the value matrix of $F_G$ for the partition $(X,Y)$ and $|m|$ is the number of edges in $m$. 
\begin{proof}
Rename $M := M_{F_G}$. Identify each vertex with its variable in $\var(F_G)$ and let $(x_i,y_i)_{i \in [|m|]}$ be the edges of $m$, with $x_i \in X$ and $y_i \in Y$. Order the variables in $X$ as $X = (x_1,\dots,x_{|X|})$ and the variables in $Y$ as $Y = (y_1,\dots,y_{|Y|})$, so that the $|m|$ first variables in each set correspond to the nodes in the matching. Permutations of rows or columns do not change the rank of a matrix so we assume that the assignments indexing the rows and the columns are ordered so that, when seeing the assignments as tuples of $0$ and $1$, the integers encoded in binary by the tuples are ordered. More formally $a_X$ is before $a'_X$ if and only if $\sum_{k} a(x_k)2^{k-1} < \sum_{k} a'(x_k)2^{k-1}$. Now consider all $2^{2|m|}$ truth assignments to $var(F_G)$ where variables corresponding to vertices not in $V_l \cup V_r$ are set to~$0$. Let $M^*$ be the $2^{|m|} \times 2^{|m|}$ sub-matrix of $M$ obtained by keeping only rows and columns indexed by these assignments. The rank of a sub-matrix is always at most that of the matrix, so $\rk(M^*) \leq \rk(M)$.
To prove the lemma, it is enough to show that $\rk(M^*) = 2^{|m|}$, which holds if and only if $\det(M^*) \neq 0$. 
For $0 \leq i \leq |m|$, let $M^*_i$ be the matrix containing the first $2^i$ rows and first $2^i$ columns of $M^*$. We prove by induction on~$i$ that all $M^*_i$ have non-zero determinant, which will prove that $M^*$ (which is $M^*_{|m|}$) has non-zero determinant, and therefore full rank.
For the base case, $M^*_0 = (1)$ has determinant~$1$.
For the general case, assume that $\det(M^*_i) \neq 0$ and observe that $M^*_{i+1} =  \left(\begin{array}{c|c} M^*_i  & 2M^*_i \\ \hline  2M^*_i &  2M^*_i \end{array}\right)$. The determinant of $M^*_{i+1}$ is
\begin{align*}\det\left(\begin{array}{c|c} M^*_i  & 2M^*_i \\ \hline  2M^*_i &  2M^*_i \end{array}\right) 
= \det\left(\begin{array}{c|c} -M^*_i  & 2M^*_i \\ \hline  0 &  2M^*_i \end{array}\right)\\
= \det(-M^*_i)\det(2M^*_i) = (-2)^{2^i} \det(M^*_i)^2 \neq 0.
\end{align*}
\end{proof}
\end{lemma}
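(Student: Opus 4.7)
\medskip

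\noindent\textbf{Proof plan.} My plan is to exhibit a $2^{|m|} \times 2^{|m|}$ submatrix of $M_{F_G}$ that has full rank; since the rank of any submatrix lower bounds the rank of the ambient matrix, this will yield the desired bound.

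Set $U = V \setminus (V_l \cup V_r)$. First I restrict to those assignments that set to $0$ every variable corresponding to a vertex in $U$, and call $M^*$ the $2^{|m|}\times 2^{|m|}$ submatrix of $M_{F_G}$ obtained by keeping only rows and columns indexed by such assignments. To describe the entries of $M^*$, I examine $F_G|_{U=0}$ edge by edge in the defining product. Each edge with both endpoints in $U$ contributes a factor $1$. Each edge with one endpoint $v \in V_l \cup V_r$ and the other in $U$ contributes $1 + x_v$, depending only on $v$'s variable. By the \emph{induced} matching hypothesis, any edge with both endpoints in $V_l \cup V_r$ is itself one of the $|m|$ matching edges $(x_i,y_i)$, and so contributes the factor $1 + \max(x_i,y_i)$. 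Collecting the per-vertex ``attachment'' factors gives
$$F_G|_{U=0} \;=\; \prod_{i=1}^{|m|} (1+\max(x_i,y_i))\, \alpha_i(x_i)\, \beta_i(y_i),$$
where $\alpha_i(b) = (1+b)^{d_U(x_i)}$ with $d_U(x_i)$ the number of neighbors of $x_i$ in $U$, and similarly for $\beta_i$.

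Since the right-hand side is a product of functions supported on pairwise disjoint variable pairs $(x_i, y_i)$, the matrix $M^*$ factors as a Kronecker product $M_1 \otimes \cdots \otimes M_{|m|}$, where $M_i$ is the $2\times 2$ value matrix of the local factor $g_i(x_i,y_i) = (1+\max(x_i,y_i))\alpha_i(x_i)\beta_i(y_i)$. A direct computation yields $\det(M_i) = -2\alpha_i(1)\beta_i(1)$, which is nonzero since $\alpha_i(1)$ and $\beta_i(1)$ are positive powers of $2$. Rank is multiplicative under Kronecker product, so $\rk(M^*) = \prod_i \rk(M_i) = 2^{|m|}$, and hence $\rk(M_{F_G}) \geq 2^{|m|}$.

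The only conceptual point is the product decomposition of $F_G|_{U=0}$: the \emph{induced}-matching hypothesis is used precisely to rule out edges inside $V_l \cup V_r$ that are not in $m$ and that would otherwise couple variables from different matching pairs, destroying the Kronecker-product structure. Everything else is standard linear algebra, namely that the value matrix of a function splitting as a product over disjoint variable sets is the Kronecker product of the individual value matrices, together with a single $2 \times 2$ determinant computation.
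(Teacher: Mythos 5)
Your proof is correct, and at the top level it follows the same strategy as the paper's: both restrict $M_{F_G}$ to the $2^{|m|}\times 2^{|m|}$ submatrix $M^*$ indexed by assignments that set every variable outside $V_l\cup V_r$ to $0$, and both show that $M^*$ is nonsingular. Where you differ is in how nonsingularity is established, and your route is in one respect more careful than the paper's. The paper argues by induction on nested principal submatrices, asserting the exact block identity $M^*_{i+1}=\left(\begin{smallmatrix}M^*_i & 2M^*_i\\ 2M^*_i & 2M^*_i\end{smallmatrix}\right)$. That identity ignores precisely the attachment factors you isolate: after conditioning $U=0$, an edge joining a matched vertex to a vertex of $U=V\setminus(V_l\cup V_r)$ contributes $(1+x_i)$ or $(1+y_i)$, so the true blocks are $M^*_i$, $2\beta M^*_i$, $2\alpha M^*_i$, $2\alpha\beta M^*_i$ with $\alpha=2^{d_U(x_{i+1})}$ and $\beta=2^{d_U(y_{i+1})}$ in your notation. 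In the paper's own application these factors are unavoidably present (in a $3$-regular expander each matched vertex has two neighbours outside the induced matching), so the displayed block identity is literally false there; the induction nevertheless survives because the pattern matrix $\left(\begin{smallmatrix}1 & 2\beta\\ 2\alpha & 2\alpha\beta\end{smallmatrix}\right)$ still has determinant $-2\alpha\beta\neq 0$. Your Kronecker formulation, $M^* = M_1\otimes\cdots\otimes M_{|m|}$ with $\det(M_i)=-2\alpha_i(1)\beta_i(1)\neq 0$ and multiplicativity of rank under $\otimes$, incorporates the factors from the start, and it also makes explicit the only place where the induced-matching hypothesis enters (forbidding non-matching edges inside $V_l\cup V_r$ that would couple distinct pairs). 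Two minor points: the factorization $M^*=M_1\otimes\cdots\otimes M_{|m|}$ holds only after ordering rows and columns consistently with the pair-by-pair indexing, which is harmless since permutations preserve rank but deserves a sentence (the paper makes the analogous ordering explicit); and $\alpha_i(1),\beta_i(1)$ are powers of $2$ that may equal $1$ when $d_U=0$, which is still all you need for $\det(M_i)\neq 0$.
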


So if, for \emph{every} balanced partition of $V$, we have a large enough induced matching $M$ between the two sides, then the rank of the value matrix for $F_G$ for any balanced partition is large, thus many balanced decomposable products are needed in a sum representing $F_G$. The only thing left is to find graphs $G$ with this ``large enough matching'' property, which turn out to be \emph{expander graphs}. A $d$-regular graph is a graph whose vertices all have degree $d$. A $(c,d)$-expander graph on vertices $V$ is a $d$-regular graph such that for any $S \subseteq V$ of size $|S| \leq |V|/2$, it holds that $|N(S)| \geq c|S|$, where $N(S) = \{v \in V \setminus S \mid (u,v) \in E, u \in S\}$.

\begin{theorem}\cite[Section 9.2]{AlonS00}\label{theorem:expander_graphs}
There is, for some $c > 0$, an infinite sequence of $(c,3)$-expander graphs $(G_i)_{i \in \mathbb{N}}$.
\end{theorem}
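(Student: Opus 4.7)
The plan is to prove Theorem~\ref{theorem:expander_graphs} by the probabilistic method, following the classical argument of Alon--Spencer. Fix a small constant $c > 0$ to be determined and, for each even $n$, sample a random $3$-regular graph $G$ on $n$ vertices using the configuration model: attach three half-edges to each vertex and take a uniformly random perfect matching on the resulting $3n$ half-edges, conditioning on the outcome being a simple graph. I would show that with probability bounded away from zero (independently of $n$) such a $G$ is a $(c,3)$-expander; since this probability is positive for all large $n$, an infinite sequence of examples follows immediately.

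The core estimate is a bound on the failure probability at a fixed subset. Expansion fails at $S$ of size $s \leq n/2$ precisely when there is a set $T \subseteq V \setminus S$ with $|T| < cs$ such that every half-edge of $S$ is matched to one in $S \cup T$. In the configuration model, a direct count of perfect matchings bounds this event by roughly $\bigl((1+c)s/n\bigr)^{3s/2}$. A union bound over $(S,T)$ contributes at most $\binom{n}{s}\binom{n-s}{\lfloor cs \rfloor} \leq (en/s)^{s}(en/(cs))^{cs}$ configurations. Combining the two and summing over $s$ yields a total failure probability of the form $\sum_{s \leq n/2} \alpha(c)^{s}(s/n)^{(1/2 - O(c))s}$ for an explicit $\alpha(c)$; for $c > 0$ sufficiently small this is bounded by, say, $1/2$ uniformly in $n$. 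Since the simple-graph event in the configuration model has probability tending to $e^{-2} > 0$, the conditional failure probability is likewise bounded away from one.

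The main obstacle will be the regime of small $s$, where the geometric union-bound estimate is loosest and where random $3$-regular graphs almost surely contain short cycles. For $s = O(1)$, $\lceil cs \rceil$ is a fixed integer and the required expansion follows from standard facts about the typical local structure of random cubic graphs (no isolated vertices and no small dense subgraphs with high probability), which one handles either by a direct enumeration of the finitely many ``bad'' local configurations or by absorbing the discrepancy into a slightly smaller $c$. A secondary, purely quantitative, complication is that degree $3$ is close to the minimum for constant-rate expansion -- no $2$-regular graph expands -- so there is no ``slack'' in the degree and the extracted constant $c$ is correspondingly small. The theorem only asserts existence of some positive $c$, so these technicalities do not affect the final conclusion.
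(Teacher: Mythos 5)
The paper does not actually prove this statement -- it is imported as a black box from Alon and Spencer -- so the only question is whether your probabilistic argument stands on its own. The strategy (configuration model, union bound over pairs $(S,T)$, conditioning on simplicity with the $e^{-2}$ limit) is the standard one and sound in principle, but the quantitative execution has a genuine gap at the large-$s$ end. With your estimate $\bigl((1+c)s/n\bigr)^{3s/2}$ for the matching event and the bound $\binom{n}{s}\binom{n-s}{\lfloor cs\rfloor}\le (en/s)^{s}(en/(cs))^{cs}$, the per-size term is $\bigl[\alpha(c)\,(s/n)^{1/2-c}\bigr]^{s}$ with $\alpha(c)=e^{1+c}c^{-c}(1+c)^{3/2}\ge e$. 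This is below $1$ only when $s/n\lesssim e^{-2}\approx 0.135$; for $s$ between roughly $0.14\,n$ and $n/2$ the terms are exponentially \emph{large} in $n$, so the sum is not bounded by $1/2$ -- it diverges. The culprit is the exponent $3s/2$: it corresponds to the worst case where every half-edge of $S$ is matched to another half-edge of $S$, and replacing each factor $\frac{3(1+c)s-2i-1}{3n-2i-1}$ by the crude $\frac{(1+c)s}{n}$ discards exactly the decay you need. Kept as a falling product, these factors give $\frac{(3(1+c)s-1)!!\,/\,(3cs-1)!!}{(3n-1)!!\,/\,(3n-3s-1)!!}$, which at $s=n/2$ and small $c$ behaves like $2^{-3n/2}$ rather than $2^{-3n/4}$ and does beat the $2^{n}$ choices of $S$. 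So the argument can be repaired, but only by retaining the exact products (or by splitting according to how many half-edges of $S$ are matched internally versus into $T$); as written, the union bound does not close for all $s\le n/2$.

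A secondary point: you identify small $s$ as ``the main obstacle,'' but that is the easy regime. For constant $s$ your own bound already gives terms of order $n^{-(1/2-c)s}=o(1)$, and short cycles are harmless for vertex expansion with a small constant $c$ (in a simple $3$-regular graph, $N(S)=\emptyset$ forces $S$ to be a union of connected components, and the random graph is connected with high probability). The delicate regime is $s=\Theta(n)$, which is precisely where your estimate breaks down. Since the paper only cites the theorem, none of this affects its results, but as a standalone proof your sketch is not yet complete.
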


\noindent We use these expander graphs for our lower bound. 

\begin{lemma}\label{lemma:matchings_in_expander_graph}
Let $G = (V,E)$ be a $(c,3)$-expander graph with $n = |V|$, and let $V = V_1 \uplus V_2$ be a balanced partition of~$V$, then there exists an induced matching $m$ of size $\Omega(n)$ between $V_1$ and~$V_2$.
\begin{proof}
$V_1$ or $V_2$ has size at most $n/2$, say $|V_1| \leq n/2$. There is $N(V_1) \subseteq V_2$ and $|N(V_1)| \geq c|V_1| \geq cn/3$ where the last inequality comes from the partition being balanced. So at least $cn/3$ edges connect $V_1$ to $V_2$. Since $G$ is 3-regular, at least a third of these edges form an matching in $G$, and a third of these matching edge share no endpoint in $V_1$, and finally a third of these edges share no endpoint in $V_2$ either. So we obtain a induced matching between $V_1$ and $V_2$ of size at least $cn/81$.
\end{proof}
\end{lemma}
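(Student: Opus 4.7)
The plan is to first count the edges crossing the balanced partition using the expansion property, and then extract a large induced matching from these cross-edges by a greedy argument that uses $3$-regularity to bound the number of edges that ``conflict'' with a given one.

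First I would assume without loss of generality that $|V_1| \leq |V_2|$. Since the partition is balanced, $n/3 \leq |V_1| \leq n/2$, so the expansion hypothesis applies to $V_1$ and gives $|N(V_1)| \geq c|V_1| \geq cn/3$. By the definition of $N(\cdot)$ we have $N(V_1) \subseteq V \setminus V_1 = V_2$, and each vertex of $N(V_1)$ is incident to at least one edge going to $V_1$, so the number of edges across the partition is at least $cn/3$. Denote this set of cross-edges by $E^\times$.

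The second step is to extract an induced matching of size $\Omega(|E^\times|)$ from $E^\times$. The key observation is that for any fixed cross-edge $e = (u,v)$ with $u \in V_1$ and $v \in V_2$, only a constant number of other cross-edges can obstruct the induced-matching property. Indeed, an edge $e' = (u',v') \in E^\times$ is incompatible with $e$ either because it shares an endpoint with $e$, or because one of the four potential extra edges $uu'$, $vv'$, $uv'$, $u'v$ is present in $G$. Since $G$ is $3$-regular, both $u$ and $v$ have at most $3$ neighbors, and each neighbor is incident to at most $3$ edges, so the total number of incompatible edges is $O(1)$. A standard greedy procedure---repeatedly pick an arbitrary edge of $E^\times$, add it to $m$, delete it together with its incompatible edges, and iterate---then yields an induced matching $m$ with $|m| = \Omega(|E^\times|) = \Omega(n)$.

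The main and only subtlety is getting the notion of ``incompatible'' correct for an \emph{induced} matching rather than an ordinary matching: beyond forbidding shared endpoints, one must also forbid pairs of edges whose endpoints are connected by some other edge of $G$, otherwise the induced subgraph on the matching vertices would contain too many edges. The bounded-degree hypothesis is essential here, because without it the number of such obstructions per edge could grow with $n$ and the greedy argument would yield only a sublinear matching.
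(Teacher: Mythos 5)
Your proof is correct and follows essentially the same route as the paper's: use the expansion of the smaller side (which has size between $n/3$ and $n/2$) to get $\Omega(n)$ cross-edges, then exploit $3$-regularity to argue that each cross-edge conflicts with only $O(1)$ others and extract an induced matching greedily. The paper instead tracks the constant explicitly via three successive ``keep a third'' steps (yielding $cn/81$), and your explicit enumeration of the obstructions $uu'$, $vv'$, $uv'$, $u'v$ is if anything a more careful treatment of induced-ness than the paper's terse phrasing.
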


Combining Theorems~\ref{theorem:sum_of_decomposable_functions} and~\ref{theorem:expander_graphs} with Lemmas~\ref{lemma:rank_lower_bound}, ~\ref{lemma:determinant_not_null} and ~\ref{lemma:matchings_in_expander_graph} yields Proposition~\ref{proposition:lower_bounds_str_SPN}.

\section{Conclusion}

We have started drawing succinctness maps for arithmetic circuits modeled after that proposed for $\NNF$ in~\cite{DarwicheM02}. Due to great amount of recent work on practical applications of AC with specific structural restrictions, we have studied classes of AC for combinations of four key restrictions. Using a mapping between monotone AC and NNF, we have drawn the full succinctness map for monotone AC by lifting the existing map for NNF and extending it to incorporate new classes defined with weak decomposability. In certain cases we could show that positive and monotone AC have the same expressive power, which gave us some succinctness results between classes of positive AC for free. We leave the challenging task of determining the remaining relations between classes of positive AC as an open question. 
Finally, we have also introduced techniques to prove lower bounds on structured positive AC and applied them to the case of smooth structured-decomposable~AC.

\appendix
\section*{Appendix}

\setcounter{lemma}{0}

\begin{lemma}
Let $C$ be a (weakly) decomposable $\AC$ (resp.~$\NNF$) then:
\begin{itemize}
\item if $C$ is smooth, all variables appear in all term subcircuits
\item if $C$ is deterministic, then any two distinct term subcircuits $T$ and $T'$ verify $T \times T' = 0$ (resp. $T \land T' \equiv 0$).
\end{itemize}
\begin{proof}
We only show the lemma for AC, as the proof for NNF is completely analogous. The two points are shown by induction on the depth of $C$, i.e., the number of nodes in~a longest directed path in $C$. AC of depth 1 are single variable nodes or constant nodes, and thus the statement of the lemma is straightforward. Now assume the lemma holds for all (w)D-AC of depth at most $k$ and consider an (w)D-AC $C$ of depth $k+1$. Let $g$ be the source node of $C$. Let $C_l$ and $C_r$ be the (w)D-AC under $g_l$ and $g_r$.

If $g$ is a $\times$-node then the term subcircuits of $C$ are products $T = T_l \times T_r$ where $T_l$ and $T_r$ are term subcircuits of $C_l$ and $C_r$. If $C$ is smooth, then $\var(T) = \var(T_l) \cup \var(T_r) = \var(C_l) \cup \var(C_r) = \var(C)$ holds by induction. If $C$ is deterministic, then let $T = T_l \times T_r$ and $T' = T'_l \times  T'_r$ be distinct term subcircuits of $C$. We have $T'_l \neq T_l$ or $T'_r \neq T_r$ and thus by induction, for every assignment $a$, we have $T_l(a) \times  T'_l(a) = 0$ or $T_r(a) \times  T'_r(a) = 0$, so $T' \times T = 0$.

If $g$ is a $+$-node, then every term subcircuit $T$ of $C$ is either equivalent to a term subcircuit $T_l$ of $C_l$ or to a term subcircuit $T_r$ of $C_r$. Assume $C$ is smooth, then $\var(C) = \var(C_l) = \var(C_r)$, but then $\var(T)$ is either $\var(T_l)$ or $\var(T_r)$, which by induction equals $\var(C)$. Now when $C$ is deterministic there is $C_l \times C_r = 0$, so any term subcircuits $T = T_l$ and $T' = T'_r$ verify $T \times T' = 0$. And by induction any two distinct subcircuits $T = T_l$ and $T' = T'_l$ verify $T \times T' = T_l \times T'_l = 0$ (likewise for $T_r$ and $T'_r$).
\end{proof}
\end{lemma}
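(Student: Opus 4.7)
The plan is to use structural induction on the depth of $C$, treating the two bullet points in parallel within the same induction. The base case handles a single-node circuit (a constant sink or a literal sink), where the only term subcircuit is the node itself and both claims hold trivially (in the smooth case, the ``all variables'' is a single variable or none; determinism has no distinct pair to check).

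For the inductive step, I would take the source $g$ of $C$ and split on whether it is a product or a sum node. If $g$ is a $\times$-node (resp.\ $\land$-node), every term subcircuit of $C$ factors as $T = T_l \times T_r$ with $T_l, T_r$ term subcircuits of $C_{g_l}, C_{g_r}$. For smoothness, apply the inductive hypothesis to $C_{g_l}$ and $C_{g_r}$ (which inherit smoothness from $C$) and use $\var(T) = \var(T_l) \cup \var(T_r) = \var(g_l) \cup \var(g_r) = \var(g)$. For determinism, take two distinct term subcircuits $T = T_l \times T_r$ and $T' = T'_l \times T'_r$; distinctness forces $T_l \neq T'_l$ or $T_r \neq T'_r$, and the inductive hypothesis gives $T_l \times T'_l = 0$ or $T_r \times T'_r = 0$, hence $T \times T' = 0$.

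If $g$ is a $+$-node (resp.\ $\lor$-node), every term subcircuit of $C$ is a term subcircuit of $C_{g_l}$ or of $C_{g_r}$. For smoothness, note that smoothness of $g$ gives $\var(g_l) = \var(g_r) = \var(g)$ and apply the inductive hypothesis to the appropriate side. For determinism there are two sub-cases: if both $T$ and $T'$ come from the same side, invoke the inductive hypothesis directly; if they come from opposite sides, use determinism of $g$ itself, which forbids any common satisfying assignment of $C_{g_l}$ and $C_{g_r}$, so $C_{g_l} \times C_{g_r} \equiv 0$ and in particular $T \times T' = 0$.

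The step I expect to need the most care is the mixed-side deterministic case for the $+$-node, because I want to be sure that the definition of determinism (no $a$ with $a_l \in \supp(g_l)$ and $a_r \in \supp(g_r)$) really does yield $C_{g_l}(a) \cdot C_{g_r}(a) = 0$ for every $a$ even in the weakly decomposable setting, where $\var(g_l)$ and $\var(g_r)$ can overlap. This needs the observation that any $a$ on $\var(g)$ restricts consistently to $a_l$ and $a_r$, so the determinism condition is indeed equivalent to $C_{g_l} \cdot C_{g_r} = 0$ on $\var(g)$; once this is made explicit the remainder of the argument is routine. The NNF statement is then obtained by replacing $\times$ by $\land$ and $+$ by $\lor$ throughout, as the structural argument is identical.
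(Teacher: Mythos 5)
Your proposal is correct and follows essentially the same route as the paper's own proof: induction on depth, with the same case split on $\times$- versus $+$-source and the same same-side/opposite-side analysis for determinism at sum nodes. Your extra remark about why determinism of $g$ yields $C_{g_l}\cdot C_{g_r}=0$ in the weakly decomposable setting is a point the paper leaves implicit, but it does not change the argument.
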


\begin{lemma}
When $C$ is a monotone $\AC$, $\phi(C)$ is an $\NNF$ whose models are $\supp(C)$. Moreover if $C$ is (weakly) decomposable, deterministic, or smooth, then $\phi(C)$ is as well.
\end{lemma}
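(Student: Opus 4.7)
The plan is to proceed by structural induction on the circuit, exploiting the fact that $\phi$ preserves the underlying DAG and the scope of every node. First, I would check the easy claim that $\phi(C)$ is an $\NNF$: after $\phi$, sink labels lie in $\{0,1,x,\overline{x}\}$ and internal gates are $\land$ or $\lor$, so the definition of $\NNF$ is met. I would also observe that $\var(g) = \var(\phi(g))$ for every node $g$, since literal labels and the graph are unchanged.

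For the equivalence $\sat(\phi(C)) = \supp(C)$, I would prove by induction on the depth of $g$ that, for every assignment $a$ to $\var(g)$, the AC sub-circuit $C_g$ satisfies $C_g(a) > 0$ iff $\phi(C)_{\phi(g)}(a) = 1$. The base cases are immediate: a non-zero constant $\alpha$ becomes $1$ and evaluates positively on every $a$; the constant $0$ is unchanged and evaluates to $0$; literal sinks compute the same $0/1$ value in both circuits. For a $\times$-node, $C_g(a) = C_{g_l}(a)\cdot C_{g_r}(a)$ is positive iff both factors are positive, which by induction matches the $\land$-gate of $\phi(C)_{\phi(g)}$. For a $+$-node, here is where \emph{monotonicity} is critical: since every constant in $C$ is non-negative, a routine sub-induction shows $C_{g_l}(a), C_{g_r}(a) \geq 0$ for all $a$, so $C_{g_l}(a) + C_{g_r}(a) > 0$ iff at least one summand is positive, matching the $\lor$-gate. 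Without monotonicity this step would fail, because subtractive cancellations could produce $0$ from two nonzero-evaluated summands; this is the one place I would be careful to flag.

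For the structural properties, decomposability, weak decomposability, and smoothness are conditions purely on $\var$ of successors and on polarities of variables appearing under a node, so they are preserved by $\phi$ because $\var$ and sink literals are preserved. Determinism is the only property that references semantics, but it reduces directly to the support/models equivalence just established: a $+$-node $g$ is deterministic in $C$ iff no assignment $a$ to $\var(g)$ restricts to elements of both $\supp(g_l)$ and $\supp(g_r)$, and by the induction above this coincides with the analogous condition on $\sat(\phi(g_l))$ and $\sat(\phi(g_r))$, which is exactly determinism of the $\lor$-node $\phi(g)$. The main obstacle is really only the correctness of the $+$-node step, where monotonicity must be invoked; every other verification is mechanical.
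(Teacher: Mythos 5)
Your proposal is correct and follows essentially the same route as the paper's proof: induction on circuit depth, with base cases for constant and literal sinks, a case analysis on $\times$- versus $+$-nodes for the support/models equivalence, and the observation that $\var(g)=\var(\phi(g))$ handles smoothness and (weak) decomposability while determinism reduces to the established equivalence. Your explicit flagging of where monotonicity is needed (non-negativity of the summands at $+$-nodes, so that no cancellation can occur) is a point the paper's proof uses only implicitly, but it is the same argument.
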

\begin{proof}
The graph of $\phi(C)$ is that of $C$ and $\phi$ contains only $\land$- and $\lor$-nodes, thus $\phi(C)$ is an NNF. It is easy to see that for each node $g$ in $C$ we have $\var(g) = \var(\phi(g))$, so smoothness and (weak) decomposability are preserved. 

We prove that by induction on the depth of $C$ that (1) $\sat(\phi(C)) = \supp(C)$ and (2) if $C$ is deterministic, then so is $\phi(C)$. If $C$ has depth 1, then it is either a constant input or a literal input. In the case $C = \alpha$, if $\alpha > 0$ then $\supp(C) = \{a_\emptyset\} = \sat(1) = \sat(\phi(C))$. If $\alpha = 0$ then $\supp(C) = \emptyset = \sat(0) = \sat(\phi(C))$. In the case $C = \ell_x$ there is $\phi(C) = C$ so we are done. Now assume (1) and (2) hold for all AC of depth at most $k$ and suppose $C$ has depth $k+1$. Let $g$ be its source node.

If $g$ is a $\times$-node, then $C(a) = 0$ if and only if $g_l(a_l) = 0$ or $g_r(a_r) = 0$, where $a_l$ and $a_r$ denote the restrictions of $a$ to $\var(g_l)$ and $\var(g_r)$ respectively. So $a \not\in \supp(C)$ if and only if $a_l \not\in \supp(g_l)$ or $a_r \not\in \supp(g_r)$. By induction $\supp(g_{l/r}) = \sat(\phi(g_{l/r})$, so $a \not\in \supp(C)$ if and only if $a \not\in \sat(\phi(g_l) \land \phi(g_r)) = \sat(\phi(C))$. So (1) holds.

If $g$ is a $+$-node, then $C(a) = 0$ iff $g_l(a_l) = 0$ and $g_r(a_r) = 0$. So $a \not\in \supp(C)$ iff $a_l \not\in \supp(g_l)$ and $a_r \not\in \supp(g_r)$. By induction $\supp(g_{l/r}) = \sat(\phi(g_{l/r})$, so $a \not\in \supp(C)$ iff $a \not\in \sat(\phi(g_l) \lor \phi(g_r)) = \sat(\phi(C))$. So (1) holds. As for (2), if $a_r \in \supp(g_r)$ implies $a_l \not\in \supp(g_l)$ and vice-versa, then $\supp(g_{l/r}) = \sat(\phi(g_{l/r})$ yields that the source $\lor$-node of $\phi(C)$ is deterministic.
\end{proof}

\begin{lemma}
For every $\NNF$ $D$, there exists an $\AC$ $C$ of size $|D|$ with $\supp(C) = \sat(D)$. Moreover if $D$ is (weakly) decomposable, deterministic, or smooth, then so is $C$.
\end{lemma}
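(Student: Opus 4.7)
The approach is to construct the inverse of the mapping $\phi$ used for the companion lemma. Given an $\NNF$ $D$, define $C = \psi(D)$ by keeping the underlying DAG of $D$ unchanged, replacing each $\lor$-node with a $+$-node and each $\land$-node with a $\times$-node, leaving literal sinks as they are, and interpreting each Boolean constant sink (labeled $0$ or $1$) as the real constant of the same value. Then $|C| = |D|$ by construction, and since every constant appearing in $C$ lies in $\{0,1\}$ and is thus non-negative, $C$ is in fact a monotone $\AC$.

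The preservation of the three structural properties is almost free. Since $\psi$ does not touch the underlying DAG or the labels of literal sinks, we have $\var(g)$ identical in $D$ and in $C$ for every node $g$. Decomposability, weak decomposability, and smoothness are all conditions stated purely in terms of the scopes $\var(g_l)$ and $\var(g_r)$ at each internal node, so they transfer directly from $D$ to $C$.

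The heart of the proof is the joint claim that $\supp(C) = \sat(D)$ and, additionally, that $C$ is deterministic whenever $D$ is. I would prove this by induction on the depth of $C$, completely mirroring the structure of the proof of the previous lemma. The base case handles literal sinks and the constant sinks $0$ and $1$ directly. For the inductive step, the essential observation is that because $C$ is monotone no cancellation can ever occur during evaluation: at a $\times$-node the product vanishes iff one of the factors does, so $\supp(g) = \{a : a_l \in \supp(g_l) \text{ and } a_r \in \supp(g_r)\}$, which by the induction hypothesis coincides with $\sat(\phi(g_l)\wedge\phi(g_r))$; at a $+$-node the sum of non-negative values vanishes iff both summands vanish, which matches the Boolean $\lor$ semantics in the same way. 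Claim (ii) is then immediate, since determinism on a $+$-node is phrased in terms of supports, which by (i) coincide with the sets of satisfying assignments in $D$.

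The main obstacle to watch out for is the non-cancellation argument at $+$-nodes: it would fail if negative constants could enter $C$. This is precisely why $\psi$ is defined to send the Boolean constant $1$ to the real constant $1$ rather than to some other positive real, ensuring that $C$ is monotone and the Boolean semantics of $\lor$ is faithfully emulated by real addition.
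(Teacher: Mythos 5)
Your proposal is correct and uses essentially the same approach as the paper: the identical relabeling construction $\psi$, with the same observations about preserved scopes. The only difference is that the paper shortcuts the verification of $\supp(C)=\sat(D)$ and of determinism by noting $\phi(\psi(D))=D$ and invoking the companion lemma, whereas you redo the depth induction from scratch; both are sound.
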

\begin{proof}
It suffices to replace each $\land$-node in $D$ by a $\times$-node and each $\lor$-node by $+$-node. Let $\psi(D)$ be that AC. Clearly $|D| = |\psi(D)|$, and it is easy to see that for each node $g$ in $D$, $\var(g) = \var(\psi(g))$, so smoothness and (weak) decomposability are preserved by $\psi$. Moreover $\phi(\psi(D)) = D$, so $\sat(D) = \supp(\psi(D))$. Determinism is preserved since $\sat(g) = \supp(\psi(g))$ holds for all nodes $g$ in $D$.
\end{proof}

\section*{Acknowledgment}

The authors were supported by the PING/ACK project of the French National Agency for Research (ANR-18-CE40-0011).

\bibliographystyle{kr}
\bibliography{main}

\begin{thebibliography}{}

\bibitem[\protect\citeauthoryear{Akshay \bgroup et al\mbox.\egroup
  }{2019}]{Akshay0CKRS19}
Akshay, S.; Arora, J.; Chakraborty, S.; Krishna, S.~N.; Raghunathan, D.; and
  Shah, S.
\newblock 2019.
\newblock Knowledge compilation for boolean functional synthesis.
\newblock In Barrett, C.~W., and Yang, J., eds., {\em 2019 Formal Methods in
  Computer Aided Design, {FMCAD} 2019, San Jose, CA, USA, October 22-25, 2019},
   161--169.
\newblock {IEEE}.

\bibitem[\protect\citeauthoryear{Alon and Spencer}{2000}]{AlonS00}
Alon, N., and Spencer, J.~H.
\newblock 2000.
\newblock {\em The Probabilistic Method, Second Edition}.
\newblock John Wiley.

\bibitem[\protect\citeauthoryear{Alon, Kumar, and Volk}{2020}]{AlonKV20}
Alon, N.; Kumar, M.; and Volk, B.~L.
\newblock 2020.
\newblock Unbalancing sets and an almost quadratic lower bound for
  syntactically multilinear arithmetic circuits.
\newblock {\em Comb.} 40(2):149--178.

\bibitem[\protect\citeauthoryear{Amarilli \bgroup et al\mbox.\egroup
  }{2017}]{AmarilliBJM17}
Amarilli, A.; Bourhis, P.; Jachiet, L.; and Mengel, S.
\newblock 2017.
\newblock A circuit-based approach to efficient enumeration.
\newblock In Chatzigiannakis, I.; Indyk, P.; Kuhn, F.; and Muscholl, A., eds.,
  {\em 44th International Colloquium on Automata, Languages, and Programming,
  {ICALP} 2017, July 10-14, 2017, Warsaw, Poland}, volume~80 of {\em LIPIcs},
  111:1--111:15.
\newblock Schloss Dagstuhl - Leibniz-Zentrum f{\"{u}}r Informatik.

\bibitem[\protect\citeauthoryear{Amarilli \bgroup et al\mbox.\egroup
  }{2020}]{AmarilliCMS20}
Amarilli, A.; Capelli, F.; Monet, M.; and Senellart, P.
\newblock 2020.
\newblock Connecting knowledge compilation classes and width parameters.
\newblock {\em Theory Comput. Syst.} 64(5):861--914.

\bibitem[\protect\citeauthoryear{Bova \bgroup et al\mbox.\egroup
  }{2014}]{BovaCMS14}
Bova, S.; Capelli, F.; Mengel, S.; and Slivovsky, F.
\newblock 2014.
\newblock Expander cnfs have exponential {DNNF} size.
\newblock {\em CoRR} abs/1411.1995.

\bibitem[\protect\citeauthoryear{Bova \bgroup et al\mbox.\egroup
  }{2016}]{BovaCMS16}
Bova, S.; Capelli, F.; Mengel, S.; and Slivovsky, F.
\newblock 2016.
\newblock Knowledge compilation meets communication complexity.
\newblock In Kambhampati, S., ed., {\em Proceedings of the Twenty-Fifth
  International Joint Conference on Artificial Intelligence, {IJCAI} 2016, New
  York, NY, USA, 9-15 July 2016},  1008--1014.
\newblock {IJCAI/AAAI} Press.

\bibitem[\protect\citeauthoryear{Capelli}{2016}]{Capelli16}
Capelli, F.
\newblock 2016.
\newblock {\em Structural restriction of {CNF}-formulas: application to model
  counting and knowledge compilation}.
\newblock Ph.D. Dissertation, Universit\'{e} Paris Diderot (Paris 7), Sorbonne
  Paris Cit\'{e}.

\bibitem[\protect\citeauthoryear{Chavira and Darwiche}{2008}]{ChaviraD08}
Chavira, M., and Darwiche, A.
\newblock 2008.
\newblock On probabilistic inference by weighted model counting.
\newblock {\em Artif. Intell.} 172(6-7):772--799.

\bibitem[\protect\citeauthoryear{Choi and Darwiche}{2017}]{ChoiD17}
Choi, A., and Darwiche, A.
\newblock 2017.
\newblock On relaxing determinism in arithmetic circuits.
\newblock In Precup, D., and Teh, Y.~W., eds., {\em Proceedings of the 34th
  International Conference on Machine Learning, {ICML} 2017, Sydney, NSW,
  Australia, 6-11 August 2017}, volume~70 of {\em Proceedings of Machine
  Learning Research},  825--833.
\newblock {PMLR}.

\bibitem[\protect\citeauthoryear{Choi, Kisa, and Darwiche}{2013}]{ChoiKD13}
Choi, A.; Kisa, D.; and Darwiche, A.
\newblock 2013.
\newblock Compiling probabilistic graphical models using sentential decision
  diagrams.
\newblock In van~der Gaag, L.~C., ed., {\em Symbolic and Quantitative
  Approaches to Reasoning with Uncertainty - 12th European Conference,
  {ECSQARU} 2013, Utrecht, The Netherlands, July 8-10, 2013. Proceedings},
  volume 7958 of {\em Lecture Notes in Computer Science},  121--132.
\newblock Springer.

\bibitem[\protect\citeauthoryear{Choi, Vergari, and Van~den
  Broeck}{2020}]{ChoiVB20}
Choi, Y.; Vergari, A.; and Van~den Broeck, G.
\newblock 2020.
\newblock Probabilistic circuits: A unifying framework for tractable
  probabilistic models.

\bibitem[\protect\citeauthoryear{Dang, Vergari, and Van~den
  Broeck}{2020}]{DangVB20}
Dang, M.; Vergari, A.; and Van~den Broeck, G.
\newblock 2020.
\newblock Strudel: Learning structured-decomposable probabilistic circuits.
\newblock In Jaeger, M., and Nielsen, T.~D., eds., {\em Proceedings of the 10th
  International Conference on Probabilistic Graphical Models}, volume 138 of
  {\em Proceedings of Machine Learning Research},  137--148.
\newblock PMLR.

\bibitem[\protect\citeauthoryear{Darwiche and Marquis}{2002}]{DarwicheM02}
Darwiche, A., and Marquis, P.
\newblock 2002.
\newblock A knowledge compilation map.
\newblock {\em J. Artif. Intell. Res.} 17:229--264.

\bibitem[\protect\citeauthoryear{Darwiche}{2003}]{Darwiche03}
Darwiche, A.
\newblock 2003.
\newblock A differential approach to inference in bayesian networks.
\newblock {\em J. {ACM}} 50(3):280--305.

\bibitem[\protect\citeauthoryear{Dechter and Mateescu}{2007}]{DechterM07}
Dechter, R., and Mateescu, R.
\newblock 2007.
\newblock {AND/OR} search spaces for graphical models.
\newblock {\em Artif. Intell.} 171(2-3):73--106.

\bibitem[\protect\citeauthoryear{Dennis}{2016}]{Dennis16}
Dennis, A.~W.
\newblock 2016.
\newblock {\em Algorithms for Learning the Structure of Monotone and
  Nonmonotone Sum-Product Networks}.
\newblock Ph.D. Dissertation, Brigham Young University.

\bibitem[\protect\citeauthoryear{Grigoriev and Karpinski}{1998}]{GrigorievK98}
Grigoriev, D., and Karpinski, M.
\newblock 1998.
\newblock An exponential lower bound for depth 3 arithmetic circuits.
\newblock In Vitter, J.~S., ed., {\em Proceedings of the Thirtieth Annual {ACM}
  Symposium on the Theory of Computing, Dallas, Texas, USA, May 23-26, 1998},
  577--582.
\newblock {ACM}.

\bibitem[\protect\citeauthoryear{Huang, Chavira, and
  Darwiche}{2006}]{HuangCD06}
Huang, J.; Chavira, M.; and Darwiche, A.
\newblock 2006.
\newblock Solving {MAP} exactly by searching on compiled arithmetic circuits.
\newblock In {\em Proceedings, The Twenty-First National Conference on
  Artificial Intelligence and the Eighteenth Innovative Applications of
  Artificial Intelligence Conference, July 16-20, 2006, Boston, Massachusetts,
  {USA}},  1143--1148.
\newblock {AAAI} Press.

\bibitem[\protect\citeauthoryear{Jerrum and Snir}{1982}]{JerrumS82}
Jerrum, M., and Snir, M.
\newblock 1982.
\newblock Some exact complexity results for straight-line computations over
  semirings.
\newblock {\em J. {ACM}} 29(3):874--897.

\bibitem[\protect\citeauthoryear{Khosravi \bgroup et al\mbox.\egroup
  }{2019}]{KhosraviCLVB19}
Khosravi, P.; Choi, Y.; Liang, Y.; Vergari, A.; and den Broeck, G.~V.
\newblock 2019.
\newblock On tractable computation of expected predictions.
\newblock In Wallach, H.~M.; Larochelle, H.; Beygelzimer, A.;
  d'Alch{\'{e}}{-}Buc, F.; Fox, E.~B.; and Garnett, R., eds., {\em Advances in
  Neural Information Processing Systems 32: Annual Conference on Neural
  Information Processing Systems 2019, NeurIPS 2019, December 8-14, 2019,
  Vancouver, BC, Canada},  11167--11178.

\bibitem[\protect\citeauthoryear{Kisa \bgroup et al\mbox.\egroup
  }{2014}]{KisaBCD14}
Kisa, D.; den Broeck, G.~V.; Choi, A.; and Darwiche, A.
\newblock 2014.
\newblock Probabilistic sentential decision diagrams.
\newblock In Baral, C.; Giacomo, G.~D.; and Eiter, T., eds., {\em Principles of
  Knowledge Representation and Reasoning: Proceedings of the Fourteenth
  International Conference, {KR} 2014, Vienna, Austria, July 20-24, 2014}.
\newblock {AAAI} Press.

\bibitem[\protect\citeauthoryear{Lowd and Domingos}{2008}]{LowdD08}
Lowd, D., and Domingos, P.~M.
\newblock 2008.
\newblock Learning arithmetic circuits.
\newblock In McAllester, D.~A., and Myllym{\"{a}}ki, P., eds., {\em {UAI} 2008,
  Proceedings of the 24th Conference in Uncertainty in Artificial Intelligence,
  Helsinki, Finland, July 9-12, 2008},  383--392.
\newblock {AUAI} Press.

\bibitem[\protect\citeauthoryear{Martens and Medabalimi}{2014}]{MartensM14}
Martens, J., and Medabalimi, V.
\newblock 2014.
\newblock On the expressive efficiency of sum product networks.
\newblock {\em CoRR} abs/1411.7717.

\bibitem[\protect\citeauthoryear{Peharz \bgroup et al\mbox.\egroup
  }{2015}]{PeharzTPD15}
Peharz, R.; Tschiatschek, S.; Pernkopf, F.; and Domingos, P.~M.
\newblock 2015.
\newblock On theoretical properties of sum-product networks.
\newblock In Lebanon, G., and Vishwanathan, S. V.~N., eds., {\em Proceedings of
  the Eighteenth International Conference on Artificial Intelligence and
  Statistics, {AISTATS} 2015, San Diego, California, USA, May 9-12, 2015},
  volume~38 of {\em {JMLR} Workshop and Conference Proceedings}.
\newblock JMLR.org.

\bibitem[\protect\citeauthoryear{Pipatsrisawat and
  Darwiche}{2008}]{PipatsrisawatD08}
Pipatsrisawat, K., and Darwiche, A.
\newblock 2008.
\newblock New compilation languages based on structured decomposability.
\newblock In Fox, D., and Gomes, C.~P., eds., {\em Proceedings of the
  Twenty-Third {AAAI} Conference on Artificial Intelligence, {AAAI} 2008,
  Chicago, Illinois, USA, July 13-17, 2008},  517--522.
\newblock {AAAI} Press.

\bibitem[\protect\citeauthoryear{Poon and Domingos}{2011}]{PoonD11}
Poon, H., and Domingos, P.~M.
\newblock 2011.
\newblock Sum-product networks: {A} new deep architecture.
\newblock In {\em {IEEE} International Conference on Computer Vision Workshops,
  {ICCV} 2011 Workshops, Barcelona, Spain, November 6-13, 2011},  689--690.
\newblock {IEEE} Computer Society.

\bibitem[\protect\citeauthoryear{Rahman, Kothalkar, and
  Gogate}{2014}]{RahmanKG14}
Rahman, T.; Kothalkar, P.; and Gogate, V.
\newblock 2014.
\newblock Cutset networks: {A} simple, tractable, and scalable approach for
  improving the accuracy of chow-liu trees.
\newblock In Calders, T.; Esposito, F.; H{\"{u}}llermeier, E.; and Meo, R.,
  eds., {\em Machine Learning and Knowledge Discovery in Databases - European
  Conference, {ECML} {PKDD} 2014, Nancy, France, September 15-19, 2014.
  Proceedings, Part {II}}, volume 8725 of {\em Lecture Notes in Computer
  Science},  630--645.
\newblock Springer.

\bibitem[\protect\citeauthoryear{Raz and Yehudayoff}{2011}]{RazY11}
Raz, R., and Yehudayoff, A.
\newblock 2011.
\newblock Multilinear formulas, maximal-partition discrepancy and mixed-sources
  extractors.
\newblock {\em J. Comput. Syst. Sci.} 77(1):167--190.

\bibitem[\protect\citeauthoryear{Raz}{2009}]{Raz09}
Raz, R.
\newblock 2009.
\newblock Multi-linear formulas for permanent and determinant are of
  super-polynomial size.
\newblock {\em J. {ACM}} 56(2):8:1--8:17.

\bibitem[\protect\citeauthoryear{Raz}{2010}]{Raz10}
Raz, R.
\newblock 2010.
\newblock Elusive functions and lower bounds for arithmetic circuits.
\newblock {\em Theory Comput.} 6(1):135--177.

\bibitem[\protect\citeauthoryear{Rooshenas and Lowd}{2016}]{RooshenasL16}
Rooshenas, A., and Lowd, D.
\newblock 2016.
\newblock Discriminative structure learning of arithmetic circuits.
\newblock In Schuurmans, D., and Wellman, M.~P., eds., {\em Proceedings of the
  Thirtieth {AAAI} Conference on Artificial Intelligence, February 12-17, 2016,
  Phoenix, Arizona, {USA}},  4258--4259.
\newblock {AAAI} Press.

\bibitem[\protect\citeauthoryear{Sauerhoff}{2003}]{Sauerhoff03}
Sauerhoff, M.
\newblock 2003.
\newblock Approximation of boolean functions by combinatorial rectangles.
\newblock {\em Theor. Comput. Sci.} 301(1-3):45--78.

\bibitem[\protect\citeauthoryear{Shen, Choi, and Darwiche}{2016}]{ShenCD16}
Shen, Y.; Choi, A.; and Darwiche, A.
\newblock 2016.
\newblock Tractable operations for arithmetic circuits of probabilistic models.
\newblock In Lee, D.~D.; Sugiyama, M.; von Luxburg, U.; Guyon, I.; and Garnett,
  R., eds., {\em Advances in Neural Information Processing Systems 29: Annual
  Conference on Neural Information Processing Systems 2016, December 5-10,
  2016, Barcelona, Spain},  3936--3944.

\bibitem[\protect\citeauthoryear{Shpilka and Yehudayoff}{2010}]{ShpilkaY10}
Shpilka, A., and Yehudayoff, A.
\newblock 2010.
\newblock Arithmetic circuits: {A} survey of recent results and open questions.
\newblock {\em Found. Trends Theor. Comput. Sci.} 5(3-4):207--388.

\bibitem[\protect\citeauthoryear{Valiant}{1980}]{Valiant80}
Valiant, L.~G.
\newblock 1980.
\newblock Negation can be exponentially powerful.
\newblock {\em Theor. Comput. Sci.} 12:303--314.

\bibitem[\protect\citeauthoryear{Vergari \bgroup et al\mbox.\egroup
  }{2021}]{VergariCLTB21}
Vergari, A.; Choi, Y.; Liu, A.; Teso, S.; and den Broeck, G.~V.
\newblock 2021.
\newblock A compositional atlas of tractable circuit operations: From simple
  transformations to complex information-theoretic queries.
\newblock {\em CoRR} abs/2102.06137.

\end{thebibliography}

\end{document}